\documentclass[11pt]{article}

\usepackage[margin=1in]{geometry}
\usepackage[utf8]{inputenc}
\usepackage[T1]{fontenc}
\usepackage{lmodern}
\usepackage{microtype}
\usepackage{amsmath,amssymb,amsthm,mathtools}
\usepackage{xcolor}
\usepackage[colorlinks=true,linkcolor=blue,citecolor=blue,urlcolor=blue]{hyperref}
\hypersetup{pdftitle={Parity and Pattern Detection in Permutation Streams},pdfauthor={Or Zamir}}
\usepackage{booktabs}
\usepackage{enumitem}

\newtheorem{theorem}{Theorem}[section]
\newtheorem{lemma}[theorem]{Lemma}
\newtheorem{proposition}[theorem]{Proposition}
\newtheorem{corollary}[theorem]{Corollary}
\theoremstyle{definition}

\theoremstyle{remark}

\usepackage{thmtools} 
\usepackage[nameinlink,noabbrev]{cleveref}

\crefname{theorem}{Theorem}{Theorems}
\Crefname{theorem}{Theorem}{Theorems}
\crefname{lemma}{Lemma}{Lemmas}
\Crefname{lemma}{Lemma}{Lemmas}
\crefname{proposition}{Proposition}{Propositions}
\Crefname{proposition}{Proposition}{Propositions}
\crefname{corollary}{Corollary}{Corollaries}
\Crefname{corollary}{Corollary}{Corollaries}

\newcommand{\E}{\mathop{\mathbb E}}
\newcommand{\D}{\mathbb D}
\newcommand{\Av}{\operatorname{Av}}
\newcommand{\inv}{\operatorname{inv}}
\newcommand{\sgn}{\operatorname{sgn}}
\newcommand{\supp}{\operatorname{supp}}
\newcommand{\pos}{\operatorname{pos}}
\newcommand{\ind}{\mathbf 1}

\title{Parity and Pattern Detection in Permutation Streams}
\author{
Mark Braverman \\ Princeton University
\and
Or Zamir \\ Tel Aviv University}
\date{}

\begin{document}
\maketitle

\begin{abstract}
Consider a permutation of $[n]$ whose values arrive one at a time. We resolve two questions about the space needed to decide natural properties of such input: First, computing the parity of the permutation requires $\Theta(n)$ bits, even with randomization and constant error, and a constant number of passes.
Second, every permutation pattern of length three can be detected deterministically in one pass using $O(\log n)$ bits. Together with the 2026 lower bounds of Berendsohn, this completes the classification of fixed permutation patterns; The optimal space complexity is $\Theta(\log n)$ for monotone patterns and patterns of length at most three, and $\Theta(n)$ for every other pattern.
As a consequence, we observe that we can verify BST traversals in streaming with logarithmic memory. 
\end{abstract}

\section{Introduction}
\label{sec:introduction}

Consider an input $\pi_1,\ldots,\pi_n$ that is promised to be a permutation of $[n]$. The values arrive in this order in the streaming model, and an algorithm must decide a property of the permutation while using as little memory space as possible. The promise that every value occurs exactly once can substantially change the space complexity of solving various problems. In particular, problems that are invariant to reordering (e.g. frequency moments) become trivial, and thus we study problems that depend only on the order itself.

Streaming questions about permutations already appeared in early work on distances and measures of sortedness. Cormode, Muthukrishnan, and Sahinalp~\cite{CormodeEtAl2001} used embeddings to obtain streaming algorithms for distances between permutations. Ajtai, Jayram, Kumar, and Sivakumar~\cite{AjtaiEtAl2002} studied approximate inversion counting, while Gopalan, Jayram, Krauthgamer, and Kumar~\cite{GopalanEtAl2007} studied the longest increasing subsequence and distance to monotonicity. The latter work also proved a linear-space lower bound for computing the longest increasing subsequence exactly, even when the input is a permutation.

Berendsohn~\cite{Berendsohn2026} studied this setting for permutation pattern detection: A permutation contains a pattern $\sigma$ if some subsequence has the same relative order as $\sigma$. For example, an occurrence of $312$ consists of positions $a<b<c$ with $\pi_a>\pi_c>\pi_b$. For each fixed-length monotone pattern, a standard increasing-subsequence algorithm uses $O(\log n)$ bits. 
For every fixed-length non-monotone pattern of length at least four, Berendsohn proved an $\Omega(n)$ lower bound. For the four remaining patterns,
\[
        132,\qquad 213,\qquad 231,\qquad 312,
\]
he showed algorithms using $O(\sqrt{n\log n})$ or $O(\sqrt n\log n)$ bits of space, but only an $\Omega(\log n)$ lower bound. Berendsohn conjectured that these patterns require $\widetilde\Theta(\sqrt n)$ space, which would imply a surprising trichotomy between three distinct complexity classes.

Another question left open by the above works concerns the parity of a permutation. Let
\[
        \inv(\pi)=\bigl|\{(i,j):i<j,\ \pi_i>\pi_j\}\bigr|,
        \qquad
        \sgn(\pi)=(-1)^{\inv(\pi)}.
\]
Counting inversions and approximating their number have been studied in the streaming setting~\cite{AjtaiEtAl2002}. Computing only their parity asks for a single bit, but the usual type of reductions based on communication complexity encounter a particular difficulty: when a permutation is split into two index sets, not necessarily contiguous, two players can compute its sign with one bit of communication (see e.g. the discussion in~\cite{Berendsohn2026}).

In this paper we resolve both questions. 
We show that permutation parity requires linear space, while all patterns of length three admit logarithmic-space algorithms.

\subsection{Permutation parity}

Our first result determines the one-pass space complexity of permutation parity, including randomized algorithms.

\begin{theorem}[Permutation parity]
\label{thm:parity-main}
Any randomized one-pass streaming algorithm that computes the sign of a permutation of $[n]$ with error at most $1/3$ on every input requires $\Omega(n)$ bits of space. 
More generally, a $p$-pass streaming algorithm requires $\Omega(n/p)$ bits. 
\end{theorem}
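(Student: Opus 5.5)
The plan is a reduction from the two-party communication complexity of inner product mod $2$, $\mathrm{IP}_m(x,y)=\sum_i x_iy_i \bmod 2$. Its randomized complexity with error $1/3$ is $\Omega(m)$, even with many rounds. We split the stream into four contiguous segments $A_1B_1A_2B_2$: Alice writes $A_1,A_2$ and Bob writes $B_1,B_2$. A $p$-pass streaming algorithm with $s$ bits of memory then yields a protocol with $O(p)$ messages of $s$ bits each: the memory state is handed over at each segment boundary, so there are $3$ handovers per pass plus one more between passes. Hence $O(ps)\ge \Omega(n)$, which gives both claims of \cref{thm:parity-main}. The obstacle noted in the introduction, that a two-way split lets the players compute the sign with one bit, is circumvented because each player owns \emph{two} interleaved segments. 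The relative order of Alice's and Bob's elements is therefore not determined by the value sets alone.

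Construction. Take $n=2m$. Gadget $i$ uses the values $2i-1,2i$. Alice owns $u_i=2i$ and places it in $A_1$ if $x_i=0$ and in $A_2$ if $x_i=1$. Bob owns $v_i=2i-1$ and places it in $B_1$ if $y_i=0$ and in $B_2$ if $y_i=1$. Within each segment the elements appear in increasing order. The value sets of the two players are fixed, so the stream is always a permutation.

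Parity computation. We write $\inv(\pi)$ as (Alice--Alice pairs) $+$ (Bob--Bob pairs) $+$ (cross pairs). The first term is computable by Alice and the second by Bob. For the cross pairs, consider $u_i$ and $v_j$.
\begin{itemize}
\item If $i=j$, then $u_i>v_i$, and there is an inversion iff $u_i$ precedes $v_i$. This fails only when $x_i=1,y_i=0$, so the contribution is $1+x_i(1-y_i)$.
\item If $i<j$, then $u_i<v_j$, and there is an inversion iff $v_j\in B_1$ and $u_i\in A_2$. The contribution is $x_i(1-y_j)$.
\item If $i>j$, then $u_i>v_j$, and there is an inversion iff $u_i$ precedes $v_j$. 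The contribution is $1+x_i(1-y_j)$.
\end{itemize}
Summing modulo $2$,
\[
\inv(\pi)\equiv C_A(x)+C_B(y)+\binom{m}{2}+m+|x|\cdot\bigl(m-|y|\bigr) \pmod 2,
\]
where $C_A,C_B$ are the within-player inversion counts. Expanding $\sum_{i,j}x_i(1-y_j)$ and the diagonal terms $\sum_i x_i(1-y_i)=|x|-\sum_ix_iy_i$, and reducing mod $2$, leaves $\sum_i x_iy_i$ plus terms that each depend on only one player's input, plus the product of the parities $|x| \bmod 2$ and $|y| \bmod 2$.

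The step needing the most care is this bookkeeping: verifying that every cross-gadget term factors into single-player quantities and parity products. Once that identity is checked, Bob learns $\sgn(\pi)$ from the simulated algorithm. Alice additionally sends $C_A(x)$, $|x| \bmod 2$ and $|x|\bmod 2$-dependent constants (a constant number of bits), and Bob recovers $\mathrm{IP}_m(x,y)$ with the same error. Therefore $O(ps)+O(1)=\Omega(m)=\Omega(n)$, so $s=\Omega(n/p)$.
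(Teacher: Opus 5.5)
There is a genuine gap, and it breaks the approach itself, not just the bookkeeping. Your case analysis is correct, but it shows that every cross pair $(u_i,v_j)$, diagonal ones included, contributes $x_i(1-y_j)$ plus a constant that depends only on whether $i\ge j$. Summing over all $m^2$ pairs gives exactly your own display, $|x|(m-|y|)+m+\binom m2$. There is no further ``diagonal'' term to add, because $\sum_i x_i(1-y_i)$ is already part of $\sum_{i,j}x_i(1-y_j)$. The step where $\sum_i x_iy_i$ appears counts the diagonal twice. Hence $\sgn(\pi)$ depends only on $C_A(x)$, $C_B(y)$, $|x|\bmod 2$ and $|y|\bmod 2$. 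Alice can send two bits and Bob then knows the sign. So the simulated streaming algorithm tells Bob nothing about $\sum_i x_iy_i$, and no lower bound follows.

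No other choice of gadget rescues a two-party reduction of this shape. With a constant number of contiguous segments, each player knows its own value set, and therefore the split of $[n]$ into the two players' values. The split of positions is determined by the segment lengths, which the players can exchange with $O(\log n)$ bits. After that, the one-bit protocol from the introduction computes the sign. So interleaving segments does not get around the obstacle: such permutations always have sign computable with $O(\log n)$ bits of two-party communication. This is why the paper uses three players instead. Positions are fixed as three increasing blocks $A(x)B(y)C(z)$. Coordinate $i$ occupies the values $\{3i-2,3i-1,3i\}$, with $(x_i,y_i,z_i)$ a permutation of $[3]$, so no single player knows how the remaining values are split between the other two. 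Then $\sgn(\pi)=(-1)^{3\binom m2}\prod_i\sgn(x_i,y_i,z_i)$. The $\Omega(m)$ communication bound comes from the Bhangale--Braverman--Khot--Liu--Minzer correlation bound, which applies because the local sign is nonembeddable over $\D$, combined with a rectangle/discrepancy argument. That argument covers randomized protocols with any number of rounds, which yields $\Omega(n/p)$ for $p$ passes.
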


This bound is tight as there is a simple deterministic algorithm that uses $n+1$ bits and computes the parity of the input permutation: We store the set of values already seen as an $n$-bit vector. On receiving $x$, we add the parity of the number of marked values larger than $x$ to an accumulator bit, and then mark $x$. Each inversion contributes exactly once. 

We remark again that standard two-player communication complexity cannot result in a lower bound here:
Consider any partition of the index set into any two parts~$[n]=X_A\cup X_B$, and give the restriction of some permutation~$\pi \in S_n$ to~$X_A,X_B$ to two players respectively.
Both players can easily compute the partitions~$\left(X_A, X_B\right)$ and~$\left(\pi(X_A), \pi (X_B)\right)$ and in particular can compute the parity of the permutation~$\pi_0$ in which the values of~$\pi(X_A)$ appear in a sorted increasing order as the image of~$X_A$ and similarly~$\pi(X_B)$ appear in increasing order as the image of~$X_B$.
Now, we notice that the parity of~$\pi$ is simply the product of the parity of~$\pi_0$ with the two internal parities of the correct permutation needed to get from~$\pi_0$ to~$\pi$ on the elements of~$X_A$, a parity that~$A$ can compute alone, and the same for~$B$.
In particular, only one bit of communication is needed for the other party to correctly compute the parity of~$\pi$.

We thus prove the lower bound via a reduction from a communication complexity problem with \emph{three players}, circumventing the above impossibility. 
We then use a theorem from~\cite{BhangaleEtAl2025} to obtain a linear lower bound for that three-player communication problem.

\subsection{A dichotomy for pattern detection}

Our second result gives deterministic algorithms for all four unresolved patterns. Write $S_\sigma(n)$ for the minimum number of bits used by a one-pass algorithm that detects a fixed pattern $\sigma$ in permutations of $[n]$.

\begin{theorem}[Pattern dichotomy]
\label{thm:dichotomy}
For every fixed permutation pattern $\sigma$ of length at least two,
\[
        S_\sigma(n)=
        \begin{cases}
        \Theta_\sigma(\log n),
            & \text{if $\sigma$ is monotone or $|\sigma|=3$,}\\[2pt]
        \Theta_\sigma(n),
            & \text{otherwise.}
        \end{cases}
\]
\end{theorem}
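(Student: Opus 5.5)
The proof splits into lower and upper bounds for the two regimes. Most pieces are either quoted from~\cite{Berendsohn2026} or standard; the genuinely new part is a logarithmic-space algorithm for the four length-three patterns $132,213,231,312$.

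\textbf{Lower bounds.} For every non-monotone $\sigma$ with $|\sigma|\ge 4$, the $\Omega(n)$ bound is exactly Berendsohn's theorem, so I will simply invoke it. For the universal $\Omega_\sigma(\log n)$ bound (every $\sigma$ with $|\sigma|\ge 2$), I will use a fooling-set argument on prefixes. For $k=|\sigma|$, I construct $\Omega(n)$ prefixes $P_1,\dots,P_m$ such that for each $i\ne j$ some suffix completes $P_i$ to a $\sigma$-containing permutation but $P_j$ to a $\sigma$-avoiding one (or vice versa). A deterministic algorithm must then be in distinct states after distinct prefixes, so it needs at least $\log m=\Omega(\log n)$ bits. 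A natural choice is to let $P_i$ be an increasing run ending just below a threshold value $i$, and to append a suffix that creates an occurrence of $\sigma$ iff a designated value is less than $i$. This only needs checking for $|\sigma|=2$ and then a padding argument for longer $\sigma$. Berendsohn already has $\Omega(\log n)$ for length three, which I can also cite.

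\textbf{Upper bounds, monotone and symmetry.} For monotone $\sigma=12\cdots k$, I use the patience-sorting idea: maintain $t_1<\dots<t_{k-1}$, where $t_j$ is the smallest possible last value of an increasing subsequence of length $j$ seen so far. On arrival of $x$, if $x>t_{k-1}$ report the pattern; otherwise replace the first $t_j\ge x$ by $x$. This uses $O(k\log n)$ bits, and the decreasing case is symmetric. For length three, $123$ and $321$ are monotone. The value complement $x\mapsto n+1-x$ is computable on the fly and maps $132\leftrightarrow 312$ and $213\leftrightarrow 231$. Hence it suffices to give algorithms for $312$ and $231$.

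\textbf{Upper bounds, the pattern $312$ (the main obstacle).} I would use the permutation promise essentially: any unseen value will eventually arrive.

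1. If some $v$ has appeared and later some $u<v$ appears, then any later value in $(u,v)$ completes a $312$. Since every unseen value eventually arrives, $\pi$ contains $312$ iff at some time an arriving $u$ below the running maximum $M$ leaves an unseen value in $(u,M)$.

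2. So avoidance is equivalent to the invariant that, whenever $u<M$ arrives, all of $(u,M)$ has already been seen.

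3. This invariant forces a nested structure on the seen set. Let $L$ be the last arrived value below the then-current maximum. Then $(L,M]$ is entirely seen, and all unseen values below $M$ lie below $L$.

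4. I would maintain a constant number of $O(\log n)$-bit quantities: $M$, $L$, the count $t$ of seen values, and the sum of seen values. From these, a gap check of the form ``is the interval $(u,M)$ fully seen'' reduces to comparing the count, and if needed the sum, of seen values in an interval against what a full interval would give.

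The delicate step is showing that these few counters actually determine membership information for the relevant interval $(u,L)$ when $u<L$. If they do not, I would strengthen the invariant recursively, for instance by showing that seen values below $L$ also form a union of at most two intervals under $312$-avoidance.

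\textbf{Upper bounds, the pattern $231$.} The same plan applies. An occurrence $a<b<c$ with $\pi_c<\pi_a<\pi_b$ is forced once an ascent $v<w$ has been seen while some value below $v$ is still unseen. Avoidance should then reduce to checking, at each ascent, that all values below the current relevant lower value are already seen. Because the seen set is a subset of $[n]$ with known size $t$, this reduces to comparing $t$ with the value in question. Combining all cases with the symmetry above yields the stated dichotomy.
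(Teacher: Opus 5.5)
There are two genuine gaps, one of them serious. (Citing Berendsohn for the $\Omega(\log n)$ and $\Omega(n)$ lower bounds is fine; the paper does the same.)

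\textbf{Missing linear upper bound.} You never prove $S_\sigma(n)=O_\sigma(n)$ for non-monotone $\sigma$ with $|\sigma|\ge 4$. Berendsohn supplies only the $\Omega_\sigma(n)$ side, and storing the prefix costs $\Theta(n\log n)$ bits. The paper closes this with Marcus--Tardos, $|\Av_t(\sigma)|\le C_\sigma^t$. A $\sigma$-avoiding prefix of length $t$ is determined by its value set and one of at most $C_\sigma^t$ relative orders. This gives at most $\sum_t\binom nt C_\sigma^t=(1+C_\sigma)^n$ nonaccepting states, plus one absorbing accepting state.

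\textbf{The length-three algorithms are not established.} These are the heart of the theorem, and your plan as designed cannot work. Steps 1--2 are correct: $\pi$ contains $312$ iff some arriving $u<M$ leaves an unseen value in $(u,M)$. But testing this at arrival time is an interval-fullness query on the seen set. After a pattern-free prefix the seen set can be essentially arbitrary, since every increasing prefix avoids $312$. So the test is not determined by $M$, a count and a sum.

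For a concrete failure, take $1,7,8,9,10$ and $3,5,8,9,10$. Both are $312$-free, with the same maximum, position, count $5$ and sum $35$, and neither has yet seen a value below the running maximum. The next value $6$ is unseen in both. Yet $(6,10)$ is fully seen in the first, while in the second $7$ is missing, which forces $10,6,7$. The same observation refutes your fallback that the seen values below $L$ form at most two intervals. Step 3 is also already false after a new maximum: for $2,1,5$ you get $L=1$, but $3,4$ are unseen.

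Your $231$ sketch has the same defect. Take $10,4,3,2,1$ and $10,6,4,2,1$: both are decreasing, hence $231$-free, both have $t=5$, and for the next value $5$ the relevant lower value is $4$ in both. But all values below $4$ have been seen only in the first; in the second, $4,5,3$ is forced. Comparing $t$ with $4$ cannot tell these apart.

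\textbf{The missing idea.} Do not test anything online; accumulate a nonnegative error term and check it once at the end. The paper computes $W(\pi)=\sum_{i<j,\;\pi_i>\pi_j}(\pi_i-\pi_j)=Q(n)-\sum_i i\pi_i$ with $O(\log n)$ bits and compares it with $\sum_i T(M_i-\pi_i)$. The difference equals $\sum_{i<j,\;\pi_i<\pi_j<M_i}(\pi_j-\pi_i)\ge 0$, a sum of positive terms each witnessing a $312$. Hence the single final test $E>0$ decides the pattern, with no membership queries on the seen set. For $231$, the analogue is a local check that the bottoms of the maximal decreasing runs increase, together with the final test $U(\pi)=W(\pi)$.
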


The new algorithms reduce the space bound for each of the patterns $132,213,231,312$ left unresolved by Berendsohn to $O(\log n)$. In particular, we disprove Conjecture~6.1 in~\cite{Berendsohn2026}. The lower bounds in \Cref{thm:dichotomy} are due to Berendsohn. 

For completeness, we also close a small gap left in~\cite{Berendsohn2026} for the second case:
Their lower bound is~$\Omega_\sigma(n)$ while the naive upper bound of storing the entire input prefix requires~$O(n\log n)$ memory bits.
To obtain the exact linear bound in the second case, we use a theorem of Marcus and Tardos~\cite{MarcusTardos2004} which states that a fixed pattern has only exponentially many avoiding permutations. 
A prefix that has not yet contained the pattern can therefore be represented using $O(n)$ bits, including its set of values. Once the pattern appears, the algorithm can move to one accepting state. 

\subsection{Verifying binary search tree traversals}

The length-three results imply logarithmic-space algorithms for recognizing traversals of binary search trees.
In a preorder traversal, the root is visited first, followed by the left and right subtrees; in a postorder traversal, the left and right subtrees are visited before the root.
Given a permutation of $[n]$, we ask whether it is the preorder or postorder traversal of some binary search tree.
The tree itself is not supplied as part of the input.

These conditions have classical characterizations by forbidden patterns.
A permutation is a preorder traversal of a binary search tree if and only if it avoids $231$, and it is a postorder traversal if and only if it avoids $312$.
See Kozma~\cite[Lemma~1.4]{Kozma2016} for the preorder characterization and Levy and Tarjan~\cite[Lemma~1]{LevyTarjan2019} for both characterizations.
Our length-three algorithms therefore give the following consequence.

\begin{corollary}
Given a permutation of $[n]$ as a stream, one can decide deterministically in one pass using $O(\log n)$ bits whether it is the preorder traversal of some binary search tree.
The same bound holds for postorder traversals.
\end{corollary}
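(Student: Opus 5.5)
The corollary follows by combining the forbidden-pattern characterizations with the length-three case of \Cref{thm:dichotomy}; no new algorithmic idea is needed. We take as given the two classical facts quoted just before the statement. A permutation of $[n]$ is the preorder traversal of some binary search tree if and only if it avoids $231$. It is a postorder traversal if and only if it avoids $312$ (Kozma~\cite[Lemma~1.4]{Kozma2016}; Levy and Tarjan~\cite[Lemma~1]{LevyTarjan2019}).

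Both facts are used only as black boxes. For orientation, the preorder direction reads as follows: the root $\pi_1$ splits the remaining values into those below and those above it. Being a preorder traversal forces every smaller value to precede every larger one, recursively in each subtree. A violation of this condition is exactly a triple $a<b<c$ with $\pi_c<\pi_a<\pi_b$, which is an occurrence of $231$.

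By \Cref{thm:dichotomy}, since $|231|=|312|=3$, there is a deterministic one-pass algorithm $\mathcal A_{231}$ that decides whether the streamed permutation contains $231$ using $O(\log n)$ bits. Likewise there is an algorithm $\mathcal A_{312}$ for $312$. The preorder test runs $\mathcal A_{231}$ on the stream and accepts exactly when it reports that $231$ does not occur. The postorder test does the same with $\mathcal A_{312}$.

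Since we only negate the output, determinism, the single pass, and the $O(\log n)$ space bound all carry over unchanged. The tree is never materialized; the characterizations guarantee that pattern avoidance alone certifies that a suitable tree exists.

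The only point needing care is that the deterministic $O(\log n)$ upper bound really holds for the specific patterns $231$ and $312$. This is the new algorithmic content established later in the paper for all four patterns $132,213,231,312$, and the corollary simply inherits it.

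As a side remark, one could transfer between $231$ and $312$ by symmetry, via reversal or inversion of the permutation. Reversal is not streaming-friendly, however, so it is cleaner to invoke the theorem for each pattern directly, as above.
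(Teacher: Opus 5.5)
Your proof is correct and follows the paper's argument: it combines the characterizations (preorder $\Leftrightarrow$ avoids $231$, postorder $\Leftrightarrow$ avoids $312$, which the paper proves as \Cref{lem:bst-characterization}) with the deterministic $O(\log n)$-bit algorithms of \Cref{prop:231-algorithm,prop:312-algorithm}, negating the output. There is one harmless slip in your unused side remark: reversal sends $231$ to $132$, not to $312$, so the symmetry relating $231$ and $312$ is inversion or reverse-complement.
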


These permutation classes have also been studied in the analysis of binary search tree algorithms.
Chalermsook et al.~\cite{ChalermsookEtAl2015} used pattern avoidance to analyze the cost of the Greedy binary search tree algorithm, including its behavior on preorder sequences.
Levy and Tarjan~\cite{LevyTarjan2019} studied inserting preorder and postorder sequences into an initially empty tree using splaying.
Our result determines the space needed to recognize these traversal conditions from the sequence of keys alone.
We include proofs of the characterizations in \Cref{sec:bst}.

There is also an immediate consequence for stack sorting.
Knuth~\cite{Knuth1968} characterized the permutations sortable into increasing order using a single stack as precisely those avoiding $231$.
Thus our algorithm also recognizes such permutations deterministically in one pass using $O(\log n)$ bits.

Our traversal algorithms also relate to the study of checking data structures in the streaming model. Chu, Kannan, and McGregor~\cite{ChuEtAl2007}, and later Fran\c{c}ois and Magniez~\cite{FrancoisMagniez2013}, studied whether a stream of priority-queue operations is consistent with the required semantics. Magniez, Mathieu, and Nayak~\cite{MagniezEtAl2014} studied the recognition of well-parenthesized expressions, a related problem of checking nested structure with limited memory. Here the input is a sequence of keys, promised to contain each element of $[n]$ exactly once, and we ask whether it describes a valid binary search tree traversal. This promise allows the traversal condition to be checked using a constant number of integer accumulators, each of logarithmic size.

\paragraph{Acknowledgments and AI Usage.}
The authors thank Berendsohn and Kozma for bringing these problems to our attention.
The parity lower bound was derived by the authors, and the length-three pattern upper bounds were mostly derived by ChatGPT. 
AI models were also used in the writing of this paper, albeit the final text is edited and verified by the authors.

\paragraph{Organization.}
\Cref{sec:overview} gives an overview of the arguments, and \Cref{sec:preliminaries} defines the notation. \Cref{sec:parity} proves the parity bounds. \Cref{sec:length-three} gives the length-three algorithms and their interpretation as traversal verification. \Cref{sec:classification} proves the linear upper bound for fixed patterns and completes the dichotomy.

\section{Overview}
\label{sec:overview}

\paragraph{Three-party communication for parity.}
Suppose Alice, Bob, and Charlie receive $x,y,z\in[3]^m$, respectively, with the promise that $(x_i,y_i,z_i)$ is a permutation of $[3]$ for every $i$. 

We map the three permutation values of coordinate $i$ to the interval $\{3i-2,3i-1,3i\}$. Alice streams her values in increasing coordinate order, followed by Bob and then Charlie. This gives a permutation of $[3m]$ consisting of three increasing blocks.
We observe that the sign of the resulting permutation, up to a fixed global correction that depends only on~$m$, is the product of the $m$ local signs of the independent size-three permutations.

It remains to show an appropriate XOR lemma that would imply that computing this product requires $\Omega(m)$ communication between the three players. 
A theorem of Bhangale et al.~\cite{BhangaleEtAl2025} gives exactly the needed estimate: the product of the local signs has exponentially small correlation with any product $F(x)G(y)H(z)$ of bounded functions. Taking $F,G,H$ to be indicator functions bounds the discrepancy of every communication rectangle, and summing over protocol transcripts gives the lower bound.

\paragraph{Weighted inversions for length-three patterns.}
Despite the ordinary inversion count being difficult to compute in small space, we observe that a weighted version is much easier to compute. 
Define
\[
        W(\pi)=\sum_{\substack{i<j\\\pi_i>\pi_j}}(\pi_i-\pi_j).
\]
For a permutation of $[n]$, a direct calculation gives
\[
        W(\pi)=\sum_{v=1}^n v^2-\sum_{i=1}^n i\pi_i.
\]
Thus $W$ can be computed with logarithmic space. We compare it with two other quantities that can also be computed in logarithmic space. Each comparison is arranged so that the difference is nonnegative and is positive exactly when the forbidden pattern occurs.

For $312$, let $M_i$ be the maximum value in the prefix ending at position $i$. Consider
\[
        \sum_{i=1}^n\sum_{v=\pi_i+1}^{M_i}(v-\pi_i).
\]
Every value $v$ in an inner sum appears either before or after $\pi_i$. The earlier values contribute exactly $W(\pi)$. Each later value forms a $312$ occurrence with $\pi_i$ and the earlier maximum. Subtracting $W$ leaves a sum of non-negative weights, and each positive one corresponds to an occurrence of the pattern.
For $231$, we present a similar summation.
The other two patterns are equivalent by symmetries. 

\paragraph{Linear space from counting avoiding prefixes.}
Fix a pattern $\sigma$, and let $a_t$ be the number of permutations of $[t]$ that avoid it. A length-$t$ avoiding prefix over $[n]$ is determined by its set of $t$ values and one of these $a_t$ relative orders. The theorem of Marcus and Tardos~\cite{MarcusTardos2004} gives $a_t\le C_\sigma^t$. Hence the total number of avoiding prefixes is at most
\[
        \sum_{t=0}^n\binom nt C_\sigma^t=(1+C_\sigma)^n.
\]
Using one state for each such prefix and one accepting state gives the $O_\sigma(n)$-bit upper bound.

\section{Preliminaries}
\label{sec:preliminaries}

All logarithms are in base two. We write $[n]=\{1,\ldots,n\}$ and $S_n$ for the set of permutations of $[n]$. A sequence $v_1,\ldots,v_k$ of distinct values is \emph{order-isomorphic} to $\sigma\in S_k$ if $v_a<v_b$ exactly when $\sigma_a<\sigma_b$. A permutation \emph{contains} $\sigma$ if it has an order-isomorphic subsequence, and otherwise \emph{avoids} $\sigma$. We write $\Av_t(\sigma)$ for the set of permutations of $[t]$ that avoid $\sigma$, including the empty permutation when $t=0$. A pattern is \emph{monotone} if it is increasing or decreasing.

For a permutation $\pi$, let $\pos_\pi(v)$ be the position of value $v$, omitting the subscript when the permutation is clear. We use the two elementary sums
\[
        T(d)=\frac{d(d+1)}2\quad(d\ge0),
        \qquad
        Q(n)=\sum_{v=1}^n v^2=\frac{n(n+1)(2n+1)}6.
\]

\section{The Parity of a Permutation}
\label{sec:parity}

We first give the simple upper bound. We then prove a communication lower bound for the product of independent signs in $S_3$ and reduce that problem to permutation parity.

\subsection{A simple \texorpdfstring{$n+1$}{n+1}-bit upper bound}

\begin{proposition}
\label{prop:parity-upper}
The sign of a permutation of $[n]$ can be computed deterministically in one pass using $n+1$ bits of space.
\end{proposition}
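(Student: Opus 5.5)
The plan is to implement the algorithm sketched in the introduction and verify its correctness and space usage. The state consists of an $n$-bit vector $b\in\{0,1\}^n$, initially all zeros, recording which values have already arrived, together with one accumulator bit $s\in\{0,1\}$, initially $0$. When the value $x=\pi_j$ arrives, we compute the parity of $\bigl|\{v>x: b_v=1\}\bigr|$ and add it to $s$ modulo two, and then set $b_x=1$. At the end we output $(-1)^s$. Scanning the bits $b_{x+1},\ldots,b_n$ while toggling $s$ directly, one bit at a time, means that no counter is needed. The working memory is exactly the $n+1$ bits of $b$ and $s$. The loop index used for the scan is part of the finite control or the per-item processing, as is standard; if one insists on counting it, we note that it adds only $O(\log n)$ bits.

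For correctness, we establish the invariant that after processing $\pi_1,\ldots,\pi_j$ the vector $b$ is the indicator of $\{\pi_1,\ldots,\pi_j\}$ and
\[
s\equiv \bigl|\{(i,k): i<k\le j,\ \pi_i>\pi_k\}\bigr| \pmod 2 .
\]
The proof is by induction on $j$. The inversions whose second coordinate is $k=j$ are exactly the earlier values larger than $\pi_j$, and these are precisely the marked $v>\pi_j$ at the moment $\pi_j$ arrives. Each inversion $(i,k)$ is therefore counted exactly once, namely at time $k$. At $j=n$ this gives $s\equiv\inv(\pi)\pmod 2$, so the output is $\sgn(\pi)$.

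There is no real obstacle here. The only point needing care is the space accounting: the per-element scan should update $s$ in place instead of storing a count. With that done, the bound is exactly $n+1$ bits, apart from the control/index, which is immaterial to the asymptotic tightness claimed against \Cref{thm:parity-main}.
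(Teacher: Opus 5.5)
Your proposal is correct and matches the paper's proof: the same $n$-bit indicator vector plus one accumulator bit, with each inversion counted exactly once, modulo two, at the arrival of its later element. Your remark about the scan index fits the paper's view of space as the logarithm of the number of states, where the update $r\leftarrow r\oplus\bigoplus_{y>x}b_y$ is just a transition function on the $2^{n+1}$ states.
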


\begin{proof}
Maintain a vector $b\in\{0,1\}^n$ and a bit $r$, both initially zero. On receiving a value $x$, perform the transition
\[
        r\leftarrow r\oplus\bigoplus_{y=x+1}^n b_y,
        \qquad
        b_x\leftarrow1,
\]
where $\oplus$ denotes addition modulo two. The vector records exactly which values have appeared. The inner sum therefore counts, modulo two, the inversions whose second position is the current position. Every inversion is counted once, so the final answer is $(-1)^r$.
\end{proof}

\subsection{XOR Lemma for Permutation Signs}
\label{sec:analytic}

Let $\D=\{z\in\mathbb C:|z|\le1\}$. If $\mu$ is a distribution on a product of three finite sets $\Sigma\times\Gamma\times\Phi$, its support graph on $\Sigma$ and $\Gamma$ has an edge $(x,y)$ whenever $\mu(x,y,z)>0$ for some $z$. Define the other two support graphs in the same way. We call $\mu$ \emph{pairwise connected} when all three graphs are connected, after removing symbols of zero marginal probability.

A function $T:\supp(\mu)\to\D$ is \emph{nonembeddable over $\D$} if there are no functions $a:\Sigma\to\D$, $b:\Gamma\to\D$, and $c:\Phi\to\D$ satisfying
\[
        a(x)b(y)c(z)T(x,y,z)=1
        \qquad\text{for every }(x,y,z)\in\supp(\mu).
\]
We use the following result of Bhangale, Braverman, Khot, Liu, and Minzer. It is Theorem~2 in the full version of their paper~\cite{BhangaleEtAl2025}.

\begin{theorem}[Bhangale, Braverman, Khot, Liu, and Minzer]
\label{thm:analytic}
Suppose that $\mu$ is pairwise connected and that $T:\supp(\mu)\to\D$ is nonembeddable over $\D$. There is a constant $\alpha=\alpha(\mu,T)>0$ such that for every positive integer $m$ and every $F:\Sigma^m\to\D$, $G:\Gamma^m\to\D$, and $H:\Phi^m\to\D$,
\[
        \left|
        \E_{(x,y,z)\sim\mu^{\otimes m}}
        \left[F(x)G(y)H(z)\prod_{i=1}^m T(x_i,y_i,z_i)\right]
        \right|
        \le 2^{-\alpha m}.
\]
\end{theorem}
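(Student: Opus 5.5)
This statement is imported verbatim from Bhangale, Braverman, Khot, Liu, and Minzer~\cite{BhangaleEtAl2025}, and in the paper we would simply invoke it. If we had to reconstruct a proof, the plan would be an induction on $m$ driven by a one-coordinate ``gain'' lemma, with the global difficulty handled by structural results for pairwise connected distributions. Concretely, we would aim to show that if the correlation
\[
\Lambda_m(F,G,H)=\E_{\mu^{\otimes m}}\Bigl[F(x)G(y)H(z)\prod_i T(x_i,y_i,z_i)\Bigr]
\]
is at least $\epsilon$ in absolute value, then after fixing a suitable coordinate or small block we obtain functions on $m-1$ coordinates with correlation at least $\epsilon(1+c)$, for some $c=c(\mu,T)>0$. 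Since correlations never exceed $1$, this forces $\epsilon\le 2^{-\alpha m}$.

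The first step is a purely local compactness statement. Nonembeddability over $\D$ is an open condition on the finite support. By compactness of $\D^{\Sigma}\times\D^{\Gamma}\times\D^{\Phi}$ there is $\delta>0$ such that for all $a,b,c$,
\[
\bigl|\E_\mu[a(x)b(y)c(z)T(x,y,z)]\bigr|\le 1-\delta .
\]
Indeed, equality $1$ would force $abcT$ to be a unimodular constant on the support, and that constant could be absorbed into $a$.

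This alone does not tensorize, because $F,G,H$ are arbitrary functions on $\Sigma^m$ and not products. The second step therefore uses pairwise connectivity. Applying Cauchy--Schwarz twice (to eliminate $H$, then $G$), we would bound $|\Lambda_m|^4$ by a correlation of $F$ with a product of a modified twist over a ``doubled'' distribution. Pairwise connectivity is exactly what makes the Markov operators between the marginals contractive on functions orthogonal to constants. Using that contraction, we would argue that $F$ must be essentially a product of functions of single coordinates (a ``$1$-junta-like'' or ``embedding'' structure) on most coordinates. Otherwise the noise introduced by the operators already gives exponential decay.

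The third step deals with the remaining structured case. When $F,G,H$ are close to products $\prod_i a_i(x_i)$, $\prod_i b_i(y_i)$, $\prod_i c_i(z_i)$, the expectation factorizes. The local bound $1-\delta$ then applies to each coordinate, giving $(1-\delta)^m$.

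The main obstacle is the second step: proving an inverse theorem strong enough to say that large correlation forces product-like structure in all three functions simultaneously, with bounds exponential in $m$ rather than merely $o(1)$. This is where the bulk of the work of~\cite{BhangaleEtAl2025} lies. The first thing we would try is a random-restriction induction: restrict a random constant fraction of coordinates, use pairwise connectivity to show that the surviving coordinates are noisy, and iterate. We expect the quantitative bookkeeping here to be the delicate part.
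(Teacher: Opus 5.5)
Citing the theorem is exactly what the paper does. The paper gives no proof of this statement: it quotes it as Theorem~2 of the full version of~\cite{BhangaleEtAl2025}, and its own work is only the hypothesis check in \Cref{lem:sign-correlation}. There, each support graph is a $6$-cycle, and multiplying the even and the odd constraints gives $P=1$ and $-P=1$. The reconstruction you sketch would not prove the statement, however. Your compactness step is correct, but the other two steps fail for concrete reasons, not merely for lack of bookkeeping.

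First, the one-coordinate gain does not follow from the local bound. With $F_{x_1}(x_2,\ldots,x_m)=F(x_1,\ldots,x_m)$ one has $\Lambda_m(F,G,H)=\E_{(x_1,y_1,z_1)\sim\mu}\bigl[T(x_1,y_1,z_1)\,\Lambda_{m-1}(F_{x_1},G_{y_1},H_{z_1})\bigr]$. This only shows that some restriction has correlation at least $\epsilon$, not $\epsilon(1+c)$. The inner factor is a trilinear form in the restrictions, not a product $a(x_1)b(y_1)c(z_1)$, so the bound $1-\delta$ cannot be applied to it. This failure of tensorization is exactly what separates three players from two, where the correlation is controlled by an $L^2$ operator norm that is multiplicative under tensor powers.

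Second, the mechanism in your step~2 cannot work. Pairwise connectivity gives pairwise maximal correlation $\rho<1$. But the tensorized Markov operator shrinks the degree-$d$ Efron--Stein part of $F$ only by $\rho^{d}$; for this $\mu$, $\E[f(x_i)\mid y_i]=-\tfrac12 f(y_i)$ for mean-zero $f$. To reach $2^{-\alpha m}$ you may therefore discard only the part of degree $\Omega(m)$, and what survives need not be product-like. For example, $F(x)=\ind\bigl[\sum_i(x_i-2)\ge0\bigr]$ has constant degree-one weight, so pairwise noise attenuates it only by a constant factor. Yet it is far from every function that factors into single-coordinate functions on most coordinates. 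Moreover, here each symbol is determined by the other two ($x_i=6-y_i-z_i$), so pairwise contraction is the only noise available. Your claim that every non-product-like $F$ is killed by noise is therefore false.

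Showing that $\prod_i T$ has exponentially small correlation even with such structured, non-product functions is the whole content of the theorem. Your step~3 would also need closeness to products at scale $2^{-\alpha m}$, which no inverse theorem of the kind you describe supplies.
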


We apply this theorem to $\Sigma=\Gamma=\Phi=[3]$ and the uniform distribution $\mu$ on the six permutations of $(1,2,3)$. On its support define
\[
        \varepsilon(a,b,c)=\sgn(a,b,c).
\]
For vectors $x,y,z\in[3]^m$ whose coordinate triples lie in this support, write
\[
        f_m(x,y,z)=\prod_{i=1}^m\varepsilon(x_i,y_i,z_i).
\]
Equivalently, $f_m$ computes the XOR of the $m$ coordinate parities.

\begin{lemma}
\label{lem:sign-correlation}
There is an absolute constant $\alpha>0$ such that, for all $m\ge1$ and all $F,G,H:[3]^m\to\D$,
\[
        \left|\E_{\mu^{\otimes m}}
        \bigl[f_m(x,y,z)F(x)G(y)H(z)\bigr]\right|
        \le2^{-\alpha m}.
\]
\end{lemma}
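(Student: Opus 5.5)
We apply \Cref{thm:analytic} with $\Sigma=\Gamma=\Phi=[3]$, $\mu$ uniform on the six permutations of $(1,2,3)$, and $T=\varepsilon$. Since $\mu^{\otimes m}$ is supported on vectors whose coordinate triples are permutations, $\prod_i T(x_i,y_i,z_i)=f_m(x,y,z)$ almost surely, so the conclusion of \Cref{thm:analytic} is exactly the claimed bound. The constant $\alpha=\alpha(\mu,\varepsilon)$ depends only on this single fixed pair, so it is absolute. Two hypotheses must be checked: pairwise connectivity and nonembeddability.

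\emph{Pairwise connectivity.} All three support graphs are the same by symmetry. The $(x,y)$ support graph has an edge between $a$ and $b$ exactly when $a\neq b$, because then $z$ is the remaining value. This graph is the complete bipartite-like graph on $[3]\times[3]$ minus the ``diagonal'' pairs $(a,a)$. Viewed as a bipartite graph between two copies of $[3]$, it is $K_{3,3}$ minus a perfect matching, which is a $6$-cycle and hence connected. All symbols have marginal probability $1/3>0$.

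\emph{Nonembeddability.} This is the only real step. Suppose $a,b,c:[3]\to\D$ satisfy $a(x)b(y)c(z)\varepsilon(x,y,z)=1$ on all six permutations. Every factor has modulus at most one and the product has modulus one, so each value $a(x),b(y),c(z)$ has modulus exactly one.

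Multiply the identities over the three even permutations $(1,2,3),(2,3,1),(3,1,2)$. Each value appears exactly once in each coordinate, so we get
\[
\Bigl(\prod_x a(x)\Bigr)\Bigl(\prod_y b(y)\Bigr)\Bigl(\prod_z c(z)\Bigr)=1 .
\]
Doing the same over the three odd permutations, each symbol again appears once per coordinate, but now each factor carries $\varepsilon=-1$. This gives
\[
\Bigl(\prod_x a(x)\Bigr)\Bigl(\prod_y b(y)\Bigr)\Bigl(\prod_z c(z)\Bigr)\cdot(-1)^3=1 .
\]
The same product of $a,b,c$ values would then equal both $1$ and $-1$, a contradiction. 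Hence $\varepsilon$ is nonembeddable over $\D$.

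With both hypotheses verified, \Cref{thm:analytic} yields the lemma. The main potential obstacle is only making sure the definitions match. In particular, the theorem is applied with $T$ defined on $\supp(\mu)$, which is exactly where $\varepsilon$ is defined, so no extension of $\varepsilon$ is needed.
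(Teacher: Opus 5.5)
Your proposal is correct and follows the paper's proof essentially verbatim. You verify pairwise connectivity via the $6$-cycle ($K_{3,3}$ minus a perfect matching) and nonembeddability by multiplying the even and odd constraints to get $P=1$ and $-P=1$, then invoke \Cref{thm:analytic}; your extra observation that $a,b,c$ must take unit-modulus values is harmless but not needed.
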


\begin{proof}
In each pairwise support graph, two symbols are adjacent exactly when they differ. The graph is $K_{3,3}$ with a perfect matching removed, hence a connected cycle of length six.

Suppose that $a,b,c:[3]\to\D$ satisfy
\[
        a(u)b(v)c(w)\varepsilon(u,v,w)=1
\]
for every permutation $(u,v,w)$ of $[3]$. Put
\[
        P=\prod_{j=1}^3a(j)b(j)c(j).
\]
Multiplying the equations for the even permutations $123,231,312$ gives $P=1$, since each symbol occurs once in each coordinate. Multiplying the equations for the odd permutations $132,213,321$ instead gives $-P=1$. This is a contradiction. Thus $\varepsilon$ is nonembeddable over $\D$, and \Cref{thm:analytic} applies.
\end{proof}

The main parallel-repetition theorem in Bhangale et al. has a different hypothesis, excluding nonconstant additive embeddings of the question distribution into $\mathbb Z$. Our distribution does have such an embedding: the three maps $j\mapsto j-2$ sum to zero on every permutation of $[3]$. \Cref{thm:analytic} requires nonembeddability of the target over $\D$, which we have just verified, and does not impose that additional hypothesis on the distribution.

Consider the three-party number-in-hand communication problem for $f_m$. Alice knows $x\in[3]^m$, Bob knows $y\in[3]^m$, and Charlie knows $z\in[3]^m$. They are promised that each $(x_i,y_i,z_i)$ is a permutation of $[3]$. The players communicate by writing bits on a common blackboard, and may use public randomness. There is no restriction on the number of rounds or on their local computation. We count the final output bit as part of the communication.

\begin{lemma}
\label{lem:communication}
Every randomized protocol computing $f_m$ with error at most $\delta<1/2$ on each promised input communicates at least
\[
        \alpha m+\log(1-2\delta)
\]
bits in the worst case, where $\alpha$ is the constant in \Cref{lem:sign-correlation}.
\end{lemma}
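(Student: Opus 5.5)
Fix a randomized protocol $\Pi$ with error at most $\delta$ on every promised input and worst-case communication $c$ bits, counting the output bit. The plan is the standard discrepancy argument: reduce to a deterministic protocol by averaging, decompose into rectangles, and bound each rectangle using \Cref{lem:sign-correlation}.

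\textbf{Step 1 (fixing the randomness).} Let $\mu^{\otimes m}$ be the input distribution; it is supported only on promised inputs. Since $\Pi$ errs with probability at most $\delta$ on every promised input, the correlation satisfies
\[
\E_{r}\E_{\mu^{\otimes m}}\bigl[f_m(x,y,z)\,(-1)^{\Pi_r(x,y,z)}\bigr]\ge 1-2\delta .
\]
Here the output is encoded as a bit $b$ with predicted sign $(-1)^b$. By averaging, some fixed public random string $r$ yields a deterministic protocol $\Pi_r$ of cost at most $c$ whose correlation with $f_m$ under $\mu^{\otimes m}$ is at least $1-2\delta$.

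\textbf{Step 2 (rectangle decomposition).} A deterministic blackboard protocol has at most $2^c$ transcripts. For each transcript $\tau$, the set of inputs producing $\tau$ is a combinatorial box $A_\tau\times B_\tau\times C_\tau$. This is the usual fact: each bit written depends only on the writer's input and the board so far. The output is determined by $\tau$, since the last bit is the output, so write it as $o_\tau\in\{\pm1\}$. Then
\[
1-2\delta\le\sum_\tau o_\tau\,\E_{\mu^{\otimes m}}\bigl[f_m\,\ind_{A_\tau}(x)\ind_{B_\tau}(y)\ind_{C_\tau}(z)\bigr]\le\sum_\tau\Bigl|\E\bigl[f_m F_\tau G_\tau H_\tau\bigr]\Bigr| .
\]

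\textbf{Step 3 (applying the XOR lemma).} Indicator functions take values in $\D$, so \Cref{lem:sign-correlation} bounds each term by $2^{-\alpha m}$. Hence $1-2\delta\le 2^{c}\cdot2^{-\alpha m}$, which gives $c\ge\alpha m+\log(1-2\delta)$.

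The only step requiring care is Step 2. The boxes partition all of $[3]^m\times[3]^m\times[3]^m$, including non-promised triples. This causes no trouble, because the expectation under $\mu^{\otimes m}$ never sees those triples, and $f_m$ is only evaluated on the support. I expect no real obstacle beyond confirming that the transcript-to-box structure holds for multi-party blackboard protocols with arbitrary rounds.
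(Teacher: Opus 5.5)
Your proposal is correct and matches the paper's proof: you use the same discrepancy argument, decomposing a deterministic protocol into at most $2^c$ transcript rectangles, bounding each by \Cref{lem:sign-correlation} with indicator functions, and concluding $1-2\delta\le 2^{c-\alpha m}$. The only cosmetic difference is that you fix one good random string by averaging, while the paper conditions on the coins and averages the inequality; the two are equivalent.
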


\begin{proof}
A rectangle is a set $R=A\times B\times C$ with $A,B,C\subseteq[3]^m$. Taking $F,G,H$ to be the indicator functions of $A,B,C$ in \Cref{lem:sign-correlation} gives
\[
        \left|\E_{\mu^{\otimes m}}[f_m\ind_R]\right|
        \le2^{-\alpha m}.
\]
The distribution is supported on promised inputs, so no condition on the other inputs of $R$ is needed.

Fix a deterministic protocol communicating at most $L$ bits, including its output. Each complete transcript determines a rectangle: once the transcript is fixed, consistency with all messages sent by a particular player is a condition only on that player's input. On promised inputs these rectangles partition the domain. Their number is at most $2^L$, since the complete transcripts are leaves of a binary protocol tree of depth at most $L$.

Let $R_1,\ldots,R_t$ be the rectangles and $o_1,\ldots,o_t\in\{-1,1\}$ their outputs. If $\Pi$ denotes the protocol's output, then
\[
        \left|\E[f_m\Pi]\right|
        =\left|\sum_{j=1}^t o_j\E[f_m\ind_{R_j}]\right|
        \le t\,2^{-\alpha m}
        \le2^{L-\alpha m}.
\]
For a randomized protocol, condition on all random coins and then average this inequality. A worst-case error bound of $\delta$ gives
\[
        1-2\delta
        \le\E_{(x,y,z),\,\mathrm{coins}}[f_m(x,y,z)\Pi(x,y,z)]
        \le2^{L-\alpha m}.
\]
Taking logarithms proves the result.
\end{proof}

In particular, error at most $1/3$ requires at least $\alpha m-\log3$ communicated bits. The argument allows any number of rounds, which will also give the lower bound for several streaming passes.

\subsection{Reduction from the communication problem to permutation parity}
\label{sec:reduction}

Given a promised input $(x,y,z)$ to $f_m$, define
\begin{align*}
        A(x)&=(3(i-1)+x_i)_{i=1}^m,\\
        B(y)&=(3(i-1)+y_i)_{i=1}^m,\\
        C(z)&=(3(i-1)+z_i)_{i=1}^m,
\end{align*}
and let $\pi=A(x)B(y)C(z)$ be their concatenation. Each player can generate its own block without knowing the other inputs. Every coordinate $i$ contributes exactly the values $3i-2,3i-1,3i$, so $\pi$ is a permutation of $[3m]$. Each of the three blocks is increasing.

\begin{lemma}
\label{lem:reduction-sign}
The permutation constructed above satisfies
\[
        \inv(\pi)=3\binom m2+
        \sum_{i=1}^m\inv(x_i,y_i,z_i).
\]
Consequently,
\[
        \sgn(\pi)=(-1)^{3\binom m2}f_m(x,y,z).
\]
\end{lemma}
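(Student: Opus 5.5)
We classify inversions of $\pi$ by which coordinates their two values come from. Every value of $\pi$ has the form $3(i-1)+u$ with $i\in[m]$ and $u\in[3]$, and we call $i$ its coordinate. A pair of values with coordinates $i<j$ satisfies $3(i-1)+u<3(j-1)+u'$ for all $u,u'\in[3]$, since the gap $3(j-i)\ge3$ exceeds $|u-u'|\le2$. Such a pair is therefore an inversion exactly when the value of coordinate $j$ appears earlier in the stream than the value of coordinate $i$. Pairs with equal coordinates we treat separately.

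\emph{Cross-coordinate pairs.} Fix coordinates $i<j$. The three values of coordinate $i$ occupy one slot in each of the blocks $A,B,C$, at the $i$-th position of that block, and the same holds for coordinate $j$. Within a single block, coordinate $i$ precedes coordinate $j$, because each block is listed in increasing coordinate order. The blocks themselves are streamed in the order $A,B,C$. So the value of coordinate $j$ in block $P$ precedes the value of coordinate $i$ in block $Q$ exactly when $P$ comes strictly before $Q$.

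Among the nine block pairs $(P,Q)$, exactly $3$ have $P$ strictly before $Q$, namely $(A,B)$, $(A,C)$ and $(B,C)$. Hence each unordered pair of coordinates contributes exactly $3$ inversions, independent of $x,y,z$. Summing over all pairs of coordinates gives $3\binom m2$.

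\emph{Same-coordinate pairs.} The three values of coordinate $i$ appear in the stream in block order: first $3(i-1)+x_i$, then $3(i-1)+y_i$, then $3(i-1)+z_i$. Subtracting the common offset $3(i-1)$ preserves relative order. So the number of inversions among these three values is $\inv(x_i,y_i,z_i)$, the inversion count of the permutation $(x_i,y_i,z_i)$ of $[3]$.

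Adding the two contributions gives the stated formula for $\inv(\pi)$. Taking $(-1)^{\inv(\pi)}$ then yields $\sgn(\pi)=(-1)^{3\binom m2}\prod_i\sgn(x_i,y_i,z_i)=(-1)^{3\binom m2}f_m(x,y,z)$.

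There is no real obstacle. The only step needing care is the cross-coordinate count of exactly $3$ inversions per pair, which rests on two facts: the value gap between coordinates dominates the within-coordinate offset, and each block is increasing in the coordinate index.
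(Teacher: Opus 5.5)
Your proof is correct and follows the paper's argument: you split inversions into same-coordinate pairs, which give the local count $\inv(x_i,y_i,z_i)$, and cross-coordinate pairs, which give exactly three inversions for the block pairs $(A,B)$, $(A,C)$, $(B,C)$. Your explicit check that the gap $3(j-i)\ge 3$ dominates the offset difference, and your enumeration of the nine block pairs, just spell out steps the paper states briefly.
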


\begin{proof}
Within a coordinate $i$, the three values occur in the order specified by $(x_i,y_i,z_i)$, giving the corresponding local inversion count.

Fix two different coordinates $i<j$. All three values of coordinate $i$ are smaller than all three values of coordinate $j$. Within one block, coordinate $i$ precedes coordinate $j$, so there is no inversion between them. Between different blocks, an inversion occurs precisely when the value of coordinate $j$ lies in the earlier block. There are three such pairs of blocks: $A$ and $B$, $A$ and $C$, and $B$ and $C$. Thus this pair of coordinates contributes exactly three inversions. Summing over coordinates proves the first identity, and taking parity proves the second.
\end{proof}

\begin{proof}[Proof of \Cref{thm:parity-main}]
Suppose first that $n=3m$, and let an $s$-bit streaming algorithm compute the sign with error at most $1/3$. Alice runs the algorithm on $A(x)$ and sends its state to Bob. Bob continues on $B(y)$ and sends the resulting state to Charlie. Charlie processes $C(z)$ and multiplies the answer by the fixed correction $(-1)^{3\binom m2}$. Revealing this answer uses one further bit. With public coins the players reproduce the random choices of the streaming algorithm, so the error is unchanged.

The protocol communicates at most $2s+1$ bits. By \Cref{lem:communication},
\[
        2s+1\ge\alpha m-\log3,
\]
and hence $s=\Omega(m)=\Omega(n)$.

For $p$ passes, the players repeat the simulation, with Charlie returning the memory to Alice between passes. There are $2p+(p-1)=3p-1$ state transmissions, followed by one output bit. Therefore
\[
        (3p-1)s+1\ge\alpha m-\log3,
\]
which gives $s=\Omega(n/p)$ for all sufficiently large $n$.

For arbitrary $n$, put $m=\lfloor n/3\rfloor$ and append the remaining largest values, in increasing order, to Charlie's block. These values create no additional inversions. The input still has three increasing blocks of fixed known lengths, and the same argument applies. Finally, \Cref{prop:parity-upper} proves the one-pass upper bound.
\end{proof}

For completeness, we remark that there is also a simple matching upper bound that improves with the number of passes.

\begin{proposition}
\label{prop:parity-passes}
For $1\le p\le n$, permutation parity can be computed deterministically in $p$ forward passes using $O(\lceil n/p\rceil+\log n)$ bits.
\end{proposition}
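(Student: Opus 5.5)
Split the value range $[n]$ into $p$ contiguous intervals $I_1,\ldots,I_p$, each of length at most $\lceil n/p\rceil$, and use pass $k$ to account for exactly those inversions whose \emph{smaller} value lies in $I_k$. Every inversion $(i,j)$ with $i<j$, $\pi_i>\pi_j$ has a unique smaller value $\pi_j$, so it is charged in exactly one pass. We accumulate a single parity bit $r$ across all passes and output $(-1)^r$ at the end.

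In pass $k$, write $I_k=\{\ell_k,\ldots,h_k\}$. We maintain a bit vector $b\in\{0,1\}^{|I_k|}$ marking which values of $I_k$ have appeared so far, together with a counter $c$ of how many values greater than $h_k$ have appeared so far. The counter needs $O(\log n)$ bits, and only its parity is actually needed, so one bit suffices. On reading a value $x$, there are three cases.
\begin{itemize}[nosep]
\item If $x>h_k$, increment $c$.
\item If $x<\ell_k$, ignore it.
\item If $x\in I_k$, the earlier values larger than $x$ are the $c$ values above $h_k$ plus the marked values $y\in I_k$ with $y>x$. So we update
\[
r\leftarrow r\oplus c\oplus\bigoplus_{y\in I_k,\,y>x}b_y
\]
and then set $b_x\leftarrow 1$.
\end{itemize}
This counts, modulo two, all inversions whose second position holds $x$, which is exactly the set of inversions with smaller value $x$.

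At the start of each pass we reset $b$ and $c$, but carry $r$ forward. We also need the index $k$, and the endpoints $\ell_k,h_k$ are recomputed from $k$, $n$ and $p$; together these take $O(\log n)$ bits. The total space is therefore $\lceil n/p\rceil+O(\log n)$. For correctness, the argument is the same as in \Cref{prop:parity-upper}, restricted to each value class. Since the classes partition the inversions by their smaller element, the XOR over all passes equals $\inv(\pi)\bmod 2$.

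There is no real obstacle here. The only point needing care is the claim that the values greater than $h_k$ can be summarized by a single count. This holds because every such value exceeds every $x\in I_k$, so any of them appearing earlier forms an inversion with $x$ regardless of its identity. Values below $\ell_k$ can never form an inversion with $x\in I_k$ in which $x$ is the smaller element, which is why they can be ignored.
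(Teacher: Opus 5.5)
Your proposal is correct and matches the paper's proof: you partition the values into $p$ consecutive intervals and charge each inversion to the pass containing its smaller value. In each pass you keep a bitmap of the current interval and the parity of the values seen above it, carrying a single answer bit across passes, exactly as the paper does.
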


\begin{proof}
Partition $[n]$ into $p$ consecutive intervals, each of size at most $\lceil n/p\rceil$. In the pass assigned to an interval $I=[\ell,u]$, count only inversions whose smaller value belongs to $I$.

During this pass, maintain a bitmap of the values of $I$ already seen, and a bit $h$ recording the parity of the number of values above $u$ already seen. On receiving $x>u$, toggle $h$. On receiving $x\in I$, add to a global answer bit the XOR of $h$ and the parity of the marked values in $I$ larger than $x$, and then mark $x$. Values below $\ell$ require no update. Reset the bitmap and $h$ between passes, but retain the global answer bit.

When $x\in I$, the update counts exactly the earlier values larger than $x$. Each inversion is counted in the unique pass containing its smaller value. The bitmap requires at most $\lceil n/p\rceil$ bits; the interval boundaries, counters, and answer require $O(\log n)$ further bits.
\end{proof}

\section{Detecting Patterns of Length Three}
\label{sec:length-three}

We give algorithms for $312$ and $231$. Complementation then gives $132$ and $213$, respectively. The two monotone patterns follow from the usual increasing-subsequence algorithm, which we recall in \Cref{sec:classification} and was also cited in~\cite{Berendsohn2026}.

Our algorithms use several known identities on permutations or close variants of them~\cite{Kobayashi2011,AyyerEtAl2011,SackUlfarsson2012,BarilEtAl2018}.

\subsection{A weighted inversion identity}

The following identity appeared in \cite{Kobayashi2011}. We include a proof for completeness and to match our notation.
\begin{lemma}
\label{lem:weighted-inversions}
For every permutation $\pi\in S_n$,
\[
        W(\pi):=\sum_{\substack{i<j\\\pi_i>\pi_j}}(\pi_i-\pi_j)
        =Q(n)-\sum_{i=1}^n i\pi_i.
\]
In particular, $W(\pi)$ can be computed deterministically in one pass with $O(\log n)$ bits.
\end{lemma}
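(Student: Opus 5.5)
The plan is to write each weighted inversion as a sum over pairs and reorganize by value. Since $\pi_i-\pi_j$ for $i<j$ is positive exactly on inversions, I will rewrite $W(\pi)$ as
\[
        W(\pi)=\sum_{i<j}\max(\pi_i-\pi_j,0).
\]
The basic device is the identity $\max(t,0)=\tfrac12(t+|t|)$. This turns the sum into two pieces. The first is $\tfrac12\sum_{i<j}(\pi_i-\pi_j)$, which depends on positions. The second is $\tfrac12\sum_{i<j}|\pi_i-\pi_j|$, which is symmetric and therefore independent of the order.

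For the symmetric piece, I will note that $\{\pi_i,\pi_j\}$ ranges over all unordered pairs of values. Hence
\[
        \sum_{i<j}|\pi_i-\pi_j|=\sum_{u<v}(v-u),
\]
a closed form in $n$ alone. Explicitly it equals $\sum_v v(2v-n-1)=2Q(n)-(n+1)T(n)$.

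For the positional piece, I will count how often each position $k$ appears with each sign. Position $k$ occurs as the earlier index $i$ in $n-k$ pairs and as the later index $j$ in $k-1$ pairs. This gives
\[
        \sum_{i<j}(\pi_i-\pi_j)=\sum_k(n+1-2k)\pi_k=(n+1)T(n)-2\sum_k k\pi_k.
\]
Adding the two pieces and halving, the $(n+1)T(n)$ terms cancel. What remains is $W(\pi)=Q(n)-\sum_k k\pi_k$, as claimed. The only real care needed is this bookkeeping of the constants; I expect no genuine obstacle.

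For the streaming claim, the algorithm keeps a position counter and the running sum $\sum_k k\pi_k$, which is at most $n\cdot T(n)=O(n^3)$. It outputs $Q(n)$ minus this sum at the end. Each quantity fits in $O(\log n)$ bits, and the algorithm is deterministic and uses one pass.
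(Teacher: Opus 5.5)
Your proof is correct, and it takes a genuinely different route from the paper's. The paper fixes a value $x=\pi_i$ with $\ell$ earlier smaller values and reads off its coefficient in $W$ directly. The value $x$ enters positively in the $x-1-\ell$ inversions with later smaller values and negatively in the $i-1-\ell$ inversions with earlier larger values. The unknown $\ell$ therefore cancels, and $W(\pi)=\sum_i\pi_i(\pi_i-i)$ follows in one line, with no global constants to track.

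You instead symmetrize with $\max(t,0)=\tfrac12(t+|t|)$, separating an order-independent part from a part linear in the positions. The price is the bookkeeping of the $(n+1)T(n)$ terms that cancel. The same per-position coefficient $\pi_k-k$ is implicit in your computation, as $\tfrac12(2\pi_k-n-1)+\tfrac12(n+1-2k)$. What your route buys is a structural explanation of why this weighting is easy to stream. It applies verbatim to any statistic $\sum_{i<j}g(\pi_i-\pi_j)$ whose odd part $\tfrac12\bigl(g(t)-g(-t)\bigr)$ is linear in $t$. The ordinary inversion count, $g(t)=\ind[t>0]$, has odd part $\tfrac12\sgn t$, so the trick gives nothing there, in line with the paper's linear lower bound for parity.

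The streaming part of your argument is the same as the paper's.
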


\begin{proof}
Fix a value $x=\pi_i$, and let $\ell$ be the number of earlier values smaller than $x$. There are $x-1-\ell$ smaller values after $x$, and $i-1-\ell$ larger values before $x$. In the expansion of the weighted inversion sum, the coefficient of $x$ is therefore
\[
        (x-1-\ell)-(i-1-\ell)=x-i.
\]
Summing over values gives
\[
        W(\pi)=\sum_{i=1}^n\pi_i(\pi_i-i)
        =\sum_{v=1}^n v^2-\sum_{i=1}^n i\pi_i.
\]
Our streaming algorithm maintains the current position and the sum $\sum_i i\pi_i$. Both require $O(\log n)$ bits, since the sum is at most $n^3$. The quantity $Q(n)$ depends only on $n$.
\end{proof}

\subsection{The pattern \texorpdfstring{$312$}{312}}

Let $M_i=\max\{\pi_1,\ldots,\pi_i\}$ and define
\[
        E_{312}(\pi)=\sum_{i=1}^n T(M_i-\pi_i)-W(\pi).
\]
The next identity is the reason that testing this single integer suffices.

\begin{lemma}
\label{lem:312-identity}
For every $\pi\in S_n$,
\[
        E_{312}(\pi)
        =\sum_{\substack{i<j\\\pi_i<\pi_j<M_i}}(\pi_j-\pi_i).
\]
Consequently, $E_{312}(\pi)\ge0$, with equality if and only if $\pi$ avoids $312$.
\end{lemma}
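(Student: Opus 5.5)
We expand $T(M_i-\pi_i)=\sum_{v=\pi_i+1}^{M_i}(v-\pi_i)$ and split the inner sum according to whether $v$ occurs before or after position $i$. Every value $v$ with $\pi_i<v\le M_i$ appears somewhere other than position $i$.

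If $v$ appears at a position $k<i$, then $\pi_k=v>\pi_i$, so $(k,i)$ is an inversion with weight $v-\pi_i$. Conversely, for every inversion $(k,i)$ with $k<i$ and $\pi_k>\pi_i$, we have $\pi_k\le M_i$, since $M_i$ is the maximum of the prefix ending at $i$. Hence $\pi_k$ lies in the range $(\pi_i,M_i]$. Summing over $i$, the earlier-value terms are in bijection with the inversions, weighted by $\pi_k-\pi_i$, and therefore total exactly $W(\pi)$ as defined in \Cref{lem:weighted-inversions}.

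If $v$ appears at a position $j>i$, write $v=\pi_j$. The condition is $\pi_i<\pi_j\le M_i$. Equality $\pi_j=M_i$ is impossible, because $M_i$ is attained at some position $\le i<j$ and the values are distinct. So the condition is $\pi_i<\pi_j<M_i$, and the weight is $\pi_j-\pi_i$. Subtracting $W(\pi)$ therefore leaves exactly the claimed sum over pairs $i<j$ with $\pi_i<\pi_j<M_i$. This bookkeeping is the only delicate step: the earlier values in the range must be exactly all earlier larger values, which holds because they are automatically bounded by $M_i$, and the endpoint $M_i$ itself never occurs later.

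For the consequences, every summand is positive, so $E_{312}(\pi)\ge0$, with equality exactly when no pair $i<j$ satisfies $\pi_i<\pi_j<M_i$. Suppose such a pair exists. Since $M_i>\pi_i$, the value $M_i$ sits at some position $k<i$, and then positions $k<i<j$ satisfy $\pi_k>\pi_j>\pi_i$, which is an occurrence of $312$. Conversely, given an occurrence $k<i<j$ with $\pi_k>\pi_j>\pi_i$, we have $M_i\ge\pi_k>\pi_j>\pi_i$, so the pair $(i,j)$ contributes a positive term. Hence $E_{312}(\pi)=0$ if and only if $\pi$ avoids $312$.
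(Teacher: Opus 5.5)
Your proposal is correct and follows essentially the same route as the paper: expand $T(M_i-\pi_i)$, split the values by whether they occur before or after position $i$, identify the earlier terms with $W(\pi)$, and use that $M_i$ lies in the prefix to get the strict inequality. You then use the position of $M_i$ to produce a $312$ occurrence from any summand, and the converse direction is argued identically.
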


\begin{proof}
Expand each triangular number as
\[
        T(M_i-\pi_i)=\sum_{v=\pi_i+1}^{M_i}(v-\pi_i).
\]
Every value in this interval occurs exactly once in $\pi$. Split the sum according to whether $v$ occurs before or after position $i$. Every earlier value larger than $\pi_i$ is at most $M_i$, so the earlier terms, summed over $i$, are exactly the weighted inversions $W(\pi)$. The remaining terms are
\[
        \sum_{\substack{i<j\\\pi_i<\pi_j\le M_i}}(\pi_j-\pi_i).
\]
Since $M_i$ already occurs in the prefix through $i$, a later value cannot equal $M_i$. This gives the stated identity.

Every summand is positive. If the pair $(i,j)$ occurs in the sum, choose a position $a\le i$ attaining $M_i$. The inequalities $M_i>\pi_j>\pi_i$ imply $a<i<j$, so $(a,i,j)$ is a $312$ occurrence. Conversely, if $a<b<c$ and $\pi_a>\pi_c>\pi_b$, then
\[
        M_b\ge\pi_a>\pi_c>\pi_b,
\]
and the pair $(b,c)$ contributes a positive term. Thus the sum vanishes exactly on $312$-avoiding permutations.
\end{proof}

\begin{proposition}
\label{prop:312-algorithm}
The pattern $312$ can be detected deterministically in one pass with $O(\log n)$ bits and $O(n)$ arithmetic operations.
\end{proposition}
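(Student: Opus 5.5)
The plan is to compute the single integer $E_{312}(\pi)$ in one pass and accept exactly when it is zero. By \Cref{lem:312-identity}, $E_{312}(\pi)\ge0$, with equality if and only if $\pi$ avoids $312$. So the decision ``$\pi$ contains $312$'' is equivalent to ``$E_{312}(\pi)>0$''. It remains to evaluate $E_{312}(\pi)=\sum_i T(M_i-\pi_i)-W(\pi)$ in logarithmic space.

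The streaming algorithm keeps four quantities:
\begin{itemize}[nosep]
\item the current position $i$;
\item the running maximum $M_i$;
\item the accumulator $S_1=\sum_{k\le i}T(M_k-\pi_k)$;
\item the accumulator $S_2=\sum_{k\le i}k\pi_k$, as in \Cref{lem:weighted-inversions}.
\end{itemize}
On receiving $\pi_i$, it increments $i$, sets $M\leftarrow\max(M,\pi_i)$, adds $T(M-\pi_i)=(M-\pi_i)(M-\pi_i+1)/2$ to $S_1$, and adds $i\pi_i$ to $S_2$. The maximum must be updated before the triangular term is added, because $M_i$ includes $\pi_i$ itself; when $\pi_i$ is the new maximum the term is $T(0)=0$. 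At the end it computes $W(\pi)=Q(n)-S_2$ via \Cref{lem:weighted-inversions}, then $E_{312}=S_1-W(\pi)$, and reports a $312$ occurrence if and only if this value is nonzero.

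For the space bound, each term $T(M_i-\pi_i)$ is at most $T(n)=O(n^2)$, so $S_1\le nT(n)=O(n^3)$. Likewise $S_2\le n^3$ and $Q(n)=O(n^3)$. Every stored integer therefore has $O(\log n)$ bits, and there are constantly many of them. Each step performs $O(1)$ arithmetic operations, including the final one-time evaluation of $Q(n)=n(n+1)(2n+1)/6$, for $O(n)$ in total.

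I do not expect a real obstacle, since correctness is entirely carried by \Cref{lem:312-identity}. The only points to check are, first, that the maximum is updated before the triangular term is added, matching the definition $M_i=\max\{\pi_1,\ldots,\pi_i\}$. Second, the algorithm needs $n$ to evaluate $Q(n)$; $n$ is part of the problem specification, and even if it were not, it equals the final position count. Third, the comparison is exact integer arithmetic, so no rounding issues arise.
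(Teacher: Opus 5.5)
Your proposal is correct and follows the paper's proof essentially verbatim: the paper also maintains $i$ and the running maximum $M$, updated before the triangular term is added. It likewise decides by the sign of $E_{312}(\pi)$ via \Cref{lem:weighted-inversions,lem:312-identity}, and the only cosmetic difference is that it merges your two accumulators into a single register $E$ initialized to $-Q(n)$ and incremented by $ix+T(M-x)$, with the same $O(n^3)$ magnitude bound.
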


\begin{proof}
Initialize three registers by
\[
        i=0,\qquad M=0,\qquad E=-Q(n).
\]
On receiving $x$, update them in the following order:
\[
        i\leftarrow i+1,\qquad
        M\leftarrow\max\{M,x\},\qquad
        E\leftarrow E+ix+T(M-x).
\]
At the end, output that the pattern occurs if and only if $E>0$. By \Cref{lem:weighted-inversions}, the final value is
\[
        -Q(n)+\sum_{i=1}^n\bigl(i\pi_i+T(M_i-\pi_i)\bigr)
        =E_{312}(\pi),
\]
so correctness follows from \Cref{lem:312-identity}.

The registers $i$ and $M$ are at most $n$. Each term added to $E$ has magnitude $O(n^2)$, so $E$ and every intermediate expression have magnitude $O(n^3)$. A constant number of $O(\log n)$-bit registers therefore suffices, with a constant number of arithmetic operations per element.
\end{proof}

\subsection{The pattern \texorpdfstring{$231$}{231}}

Partition $\pi$ into its maximal contiguous decreasing runs
\[
        R_1,R_2,\ldots,R_k.
\]
Let $b_r$ be the last, hence smallest, value in $R_r$. For a value $h$ in the permutation, write $b(h)$ for the bottom of its run. Define
\[
        U(\pi)=\sum_{r=1}^k\sum_{h\in R_r}T(h-b_r)
        =\sum_{h=1}^n T(h-b(h)).
\]

\begin{lemma}
\label{lem:231-characterization}
A permutation $\pi$ avoids $231$ if and only if
\[
        b_1<b_2<\cdots<b_k
        \qquad\text{and}\qquad
        U(\pi)=W(\pi).
\]
\end{lemma}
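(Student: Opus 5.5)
We compare the two sums term by term as sums over ordered pairs of values. Expanding the triangular number as in the proof of \Cref{lem:312-identity},
\[
        U(\pi)=\sum_{h=1}^n\ \sum_{v=b(h)}^{h-1}(h-v),
        \qquad
        W(\pi)=\sum_{h=1}^n\ \sum_{\substack{v<h\\ \pos(v)>\pos(h)}}(h-v).
\]
Both are sums of strictly positive weights $h-v$ over pairs $v<h$. Therefore $U-W$ equals the total weight of pairs counted only by $U$, minus the total weight of pairs counted only by $W$. The pairs counted only by $U$ have $b(h)\le v<h$ with $v$ before $h$; call them \emph{extra}. The pairs counted only by $W$ have $v<\min(h,b(h))$, i.e.\ $v<b(h)$, with $v$ after $h$; call them \emph{missing}. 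The plan is to show the following two facts.
\begin{enumerate}[label=(\roman*)]
\item When the bottoms are increasing, there are no missing pairs.
\item When the bottoms are increasing, extra pairs exist if and only if $\pi$ contains $231$.
\end{enumerate}
Separately, we show that non-increasing bottoms already force a $231$.

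\emph{Bottoms.} Suppose $b_r>b_{r+1}$ for some consecutive runs. By maximality of $R_r$, the first entry $t$ of $R_{r+1}$ exceeds $b_r$, since otherwise the run would continue. Then $b_r,t,b_{r+1}$ is an occurrence of $231$, because $b_{r+1}<b_r<t$. So any $231$-avoider has $b_1<\cdots<b_k$. Conversely, if $b_1\le\cdots$ fails, some consecutive pair fails, which yields this occurrence.

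\emph{No missing pairs.} Assume the bottoms are increasing, and let $v<h$ with $v$ after $h$. If $v$ lies in the run of $h$, then $v\ge b(h)$ directly. If $v$ lies in a later run, then $v\ge b(v)>b(h)$. Either way $v\ge b(h)$, so the pair is not missing and $U-W$ is exactly the total weight of extra pairs. In particular $U\ge W$.

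\emph{Extra pairs versus $231$.} Take an extra pair: $v$ before $h$ with $b(h)\le v<h$. Then $v$ lies in an earlier run, since in the same run an earlier value would exceed $h$. Hence $v\ne b(h)$, so $b(h)<v<h$, and $b(h)$ occurs at or after $h$; since $b(h)<h$ it occurs strictly after. Thus $v,h,b(h)$ is a $231$.

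Conversely, let positions $a<c'<c$ give a $231$, with $\pi_c<\pi_a<\pi_{c'}$, and put $h=\pi_{c'}$. If $\pi_c$ is in the run of $h$, then $b(h)\le\pi_c$. Otherwise $\pi_c$ is in a later run, and $b(h)<b(\pi_c)\le\pi_c$. In both cases $b(h)<\pi_a<h$ with $\pi_a$ before $h$, so $(\pi_a,h)$ is an extra pair and $U>W$.

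Combining these facts: a $231$-avoider has increasing bottoms, no missing pairs, and no extra pairs, so $U=W$. If $\pi$ contains $231$, either the bottoms fail to increase, or they increase and $U-W>0$. This proves \Cref{lem:231-characterization}.

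The main subtlety is the bookkeeping of the mixed orientation: $U$ is indexed by value intervals, while $W$ is indexed by positions. The key step that makes the comparison clean is establishing the absence of missing pairs under increasing bottoms, so that $U-W$ becomes a sum of nonnegative terms.
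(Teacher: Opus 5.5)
Your proof is correct and follows essentially the same route as the paper. The shared steps are the consecutive-bottoms witness $b_r,t,b_{r+1}$ for a $231$, the observation that under increasing bottoms every inversion $(h,v)$ has $v\ge b(h)$ (so $U-W$ is a nonnegative sum over earlier values $v\in[b(h),h-1]$), and the two-way correspondence between such terms and $231$ occurrences. Your ``extra/missing pairs'' bookkeeping just repackages the paper's expansion; like the paper, it relies on every integer in $[b(h),h-1]$ actually occurring in $\pi$, which you use implicitly and which holds because $\pi$ is a permutation of $[n]$.
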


\begin{proof}
Suppose first that $b_r>b_{r+1}$ for two consecutive runs, and let $h$ be the first value of $R_{r+1}$. Maximality of the runs gives $b_r<h$. In particular, $h$ and $b_{r+1}$ are distinct positions, and the three values
\[
        b_r,\ h,\ b_{r+1}
\]
occur in this order and form $231$. Thus increasing run bottoms are necessary for avoidance.

Assume from now on that all run bottoms increase. If $v<h$ occurs after $h$, then $v\ge b(h)$. This is immediate when the two values are in the same run. If $v$ belongs to a later run, its run bottom is larger than $b(h)$, so the inequality still holds. Therefore every inversion whose larger value is $h$ is represented in the expansion
\[
        T(h-b(h))=\sum_{v=b(h)}^{h-1}(h-v).
\]
Each integer in this interval occurs somewhere in the permutation. Subtracting the contributions of those occurring after $h$ gives
\[
        U(\pi)-W(\pi)
        =\sum_{h=1}^n
        \sum_{\substack{b(h)\le v<h\\\pos(v)<\pos(h)}}(h-v)
        \ge0.
\]

Every summand gives a $231$ occurrence. Indeed, the bottom $b(h)$ cannot occur before $h$. If $h=b(h)$, the inner interval is empty; otherwise the bottom occurs strictly after $h$. Hence an earlier value $v$ in this sum satisfies $b(h)<v<h$, and the three values $v,h,b(h)$ occur in that order.

Conversely, suppose $v,h,z$ is a $231$ occurrence. Since $z<h$ occurs after $h$, the increasing-bottom condition gives
\[
        b(h)\le z<v<h.
\]
The earlier value $v$ therefore contributes the positive term $h-v$ to the inner sum for $h$. Consequently, under the increasing-bottom condition, $U-W$ is zero exactly when $\pi$ avoids $231$. Together with the first part of the proof, this establishes the characterization.
\end{proof}

The quantity $U$ can be computed without storing the runs. If a decreasing run $R$ has bottom $b$, length $c$, sum $s$, and squared sum $q$, then
\begin{align*}
        \sum_{h\in R}T(h-b)
        &=\frac12\sum_{h\in R}\bigl((h-b)^2+(h-b)\bigr)\\
        &=\frac{q+(1-2b)s+c(b^2-b)}2.
\end{align*}
This is an integer, since each term $(h-b)(h-b+1)$ is even.

\begin{proposition}
\label{prop:231-algorithm}
The pattern $231$ can be detected deterministically in one pass with $O(\log n)$ bits and $O(n)$ arithmetic operations.
\end{proposition}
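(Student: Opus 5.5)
The algorithm checks both conditions of \Cref{lem:231-characterization} in one pass with a constant number of $O(\log n)$-bit registers. It computes $W(\pi)$ exactly as in \Cref{lem:weighted-inversions}, and $U(\pi)$ run by run using the closed form derived just above.

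\emph{Registers.} We maintain the position counter $i$ and the running sum $S=\sum i\pi_i$. For the current decreasing run we keep its length $c$, sum $s$, squared sum $q$, and last value $p$. The last value is the current candidate bottom. We also keep an accumulator $U$, the bottom $b_{\mathrm{prev}}$ of the previous completed run (initially $0$), and a flag recording whether the run bottoms have ever failed to increase.

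\emph{Update on receiving $x$.} First increment $i$ and add $ix$ to $S$. If the current run is nonempty and $x<p$, then $x$ extends the run: add $1$, $x$, $x^2$ to $c$, $s$, $q$ and set $p\leftarrow x$. Otherwise, if the run is nonempty, it ends with bottom $b=p$, and we finalize it:
\begin{itemize}[nosep]
\item add $\bigl(q+(1-2b)s+c(b^2-b)\bigr)/2$ to $U$;
\item if $b<b_{\mathrm{prev}}$, set the failure flag;
\item set $b_{\mathrm{prev}}\leftarrow b$.
\end{itemize}
We then start a new run containing only $x$. At the end of the stream we finalize the last run in the same way.

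\emph{Output.} The algorithm reports that $\pi$ avoids $231$ if and only if the flag is unset and $U=Q(n)-S$. The right-hand side equals $W(\pi)$ by \Cref{lem:weighted-inversions}.

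\emph{Correctness.} Runs are maximal contiguous decreasing segments. A new run starts exactly when $x>p$; since values are distinct, this is the same as $x\ge p$. So the segments the algorithm forms are exactly $R_1,\dots,R_k$, and each finalized $b$ is the bottom $b_r$. The flag therefore checks $b_1<\cdots<b_k$. By the closed form, $U$ equals $U(\pi)$, and correctness follows from \Cref{lem:231-characterization}.

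\emph{Space and time.} The registers $i,c,p,b_{\mathrm{prev}}$ are at most $n$. The sum $s$ is at most $n^2$. The quantities $q$, $S$, $U$, $Q(n)$ and the per-run contributions are all $O(n^3)$ in magnitude. Intermediate expressions such as $(1-2b)s$ are also $O(n^3)$. Hence $O(\log n)$ bits suffice, and each element costs $O(1)$ arithmetic operations.

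The only delicate point is the boundary handling of runs: finalizing the last run at the end of the stream, and treating the first run correctly by initializing $b_{\mathrm{prev}}=0$. Everything else is bookkeeping on top of the characterization lemma.
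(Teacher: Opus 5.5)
Your proposal is correct and follows essentially the same route as the paper. Both track the current run's length, sum, squared sum and last value, accumulate $U$ through the closed form, compare consecutive run bottoms against $b_{\mathrm{prev}}$, and test $U=Q(n)-\sum_i i\pi_i$ using \Cref{lem:weighted-inversions,lem:231-characterization}. The differences are only cosmetic: you keep a sticky failure flag where the paper moves to an absorbing accepting state, and your strict test $b<b_{\mathrm{prev}}$ is equivalent to the paper's $b\le b_{\mathrm{prev}}$ because values are distinct and $b_{\mathrm{prev}}$ starts at $0$.
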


\begin{proof}
Maintain a position counter $i$, a sum $D=\sum_{j\le i}j\pi_j$, an accumulator $U$ for completed runs, and the length, sum, squared sum, and last value of the current run. Also maintain the bottom $b_{\mathrm{prev}}$ of the previous completed run, initially zero. All other counters and sums are initially zero.

On receiving $x$, increment $i$ and add $ix$ to $D$. If the current run is nonempty and $x$ is larger than its last value, finalize that run before inserting $x$. To finalize a run, let $b$ be its last value and $c,s,q$ its three statistics. If $b\le b_{\mathrm{prev}}$, report that $231$ occurs and move to an absorbing accepting state. Otherwise, add
\[
        \frac{q+(1-2b)s+c(b^2-b)}2
\]
to $U$, set $b_{\mathrm{prev}}=b$, and reset the three run statistics to zero. After any necessary finalization, insert $x$ into the current run by adding $1,x,x^2$ to its length, sum, and squared sum, and recording $x$ as its last value.

After the final element, finalize the last run in the same way. If the algorithm has not already accepted, its completed bottoms increase and its accumulator is $U(\pi)$. Report that the pattern occurs exactly when
\[
        U\ne Q(n)-D.
\]
Correctness follows from \Cref{lem:weighted-inversions,lem:231-characterization}.

There are a constant number of registers. The run length is at most $n$, its sum is at most $n^2$, and its squared sum is at most $n^3$. The accumulators $D,U$ and every term in the run formula have magnitude $O(n^3)$. Thus all storage, including intermediate arithmetic, is $O(\log n)$ bits. Each element is inserted once, and at most one run is finalized per element, giving $O(n)$ arithmetic operations.
\end{proof}

\subsection{Complementation}

The \emph{complement} of $\pi\in S_n$ is the permutation $\overline\pi$ with
\[
        \overline\pi_i=n+1-\pi_i.
\]
Complementation reverses all comparisons without changing the order of positions. It maps an occurrence of a pattern $\sigma\in S_k$ to an occurrence of the pattern with values $k+1-\sigma_i$. In particular, it exchanges $312$ with $132$, and $231$ with $213$. The transformation is performed on each input element as it arrives.

\begin{corollary}
\label{cor:four-patterns}
Each pattern in $\{132,213,231,312\}$ can be detected deterministically in one pass using $O(\log n)$ bits and $O(n)$ arithmetic operations.
\end{corollary}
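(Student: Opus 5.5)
The plan is to combine \Cref{prop:312-algorithm} and \Cref{prop:231-algorithm} with the complementation map. For $312$ and $231$ there is nothing new to do: those two propositions already give deterministic one-pass algorithms with a constant number of $O(\log n)$-bit registers and $O(n)$ arithmetic operations. It remains to handle $132$ and $213$. Here I will run the $312$ (respectively $231$) algorithm on the complemented stream $\overline\pi$, replacing each arriving value $x$ by $n+1-x$ before processing it.

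The correctness step is the observation, already made in the text, that a subsequence $\pi_a,\pi_b,\pi_c$ is order-isomorphic to $\sigma$ exactly when $\overline\pi_a,\overline\pi_b,\overline\pi_c$ is order-isomorphic to $\overline\sigma$. This holds because $x\mapsto n+1-x$ reverses every comparison and leaves positions unchanged. Hence $\pi$ contains $132$ if and only if $\overline\pi$ contains $\overline{132}=312$. Likewise, $\pi$ contains $213$ if and only if $\overline\pi$ contains $\overline{213}=231$. I will check both complement computations directly: $4-1,4-3,4-2$ gives $3,1,2$, and $4-2,4-1,4-3$ gives $2,3,1$. Since $\overline\pi$ is again a permutation of $[n]$, the earlier propositions apply verbatim, and their output on $\overline\pi$ is the correct answer for $\pi$.

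For resources, the transformation needs only $n$, which the algorithms already assume is known (it appears in $Q(n)$). It costs one subtraction per element and no additional storage beyond an $O(\log n)$-bit temporary. The space and operation bounds therefore carry over unchanged. I expect no real obstacle here. The only point needing care is confirming that the complement sends each of $132$ and $213$ to the intended pattern, and not, for example, to its reverse; the explicit computation above settles this.
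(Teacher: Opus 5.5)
Your proposal is correct and matches the paper's argument: use \Cref{prop:312-algorithm} and \Cref{prop:231-algorithm} directly for $312$ and $231$. For $132$ and $213$, run those algorithms on the complemented stream $x\mapsto n+1-x$; this swaps $132$ with $312$ and $213$ with $231$ while preserving positions, so the space and operation bounds are unchanged.
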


We remark that the proofs depend on more than the distinctness of the input values. In \Cref{lem:312-identity}, every integer between $\pi_i$ and $M_i$ must occur either before or after position $i$. The proof for $231$ uses the analogous fact for the interval from a run bottom to a value in that run. Missing values would leave additional positive terms that need not correspond to any pattern occurrence. This is consistent with Berendsohn's linear lower bound for all four patterns when the input is an arbitrary sequence of distinct values~\cite[Theorem~1.3]{Berendsohn2026}.

\subsection{Binary search tree traversals}
\label{sec:bst}

A binary search tree on $[n]$ has each label once, and all labels in the left subtree of a node are smaller than its label, while all labels in the right subtree are larger. We use the conventions root-left-right for preorder and left-right-root for postorder. The following elementary characterizations connect these traversal conditions to our algorithms.

\begin{lemma}
\label{lem:bst-characterization}
A permutation of $[n]$ is the preorder traversal of a binary search tree if and only if it avoids $231$. It is the postorder traversal of a binary search tree if and only if it avoids $312$.
\end{lemma}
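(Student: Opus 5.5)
We argue by induction on $n$, using the recursive structure of both traversals. We treat preorder first and then obtain postorder by a symmetric argument; a reversal-and-complement map would also do, but the direct argument is just as short.

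\emph{Preorder.} A nonempty sequence $\pi$ is the preorder traversal of a binary search tree on its value set exactly when the following holds. Write $\pi=r\,L\,R$, where $r=\pi_1$ is the root. Then every value in $L$ is smaller than $r$, every value in $R$ is larger than $r$, and $L$ and $R$ are themselves preorder traversals of binary search trees on their value sets. This is immediate from the root-left-right convention and the search-tree ordering. The whole argument only uses relative order, so it applies to sequences of distinct values, and we induct on length.

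For the forward direction, suppose $\pi=rLR$ has this form and contains a $231$ occurrence $v,h,z$ with $z<v<h$ appearing in that order. We split into cases by where the three entries lie.
\begin{itemize}
\item If all three lie in $L$ or all three lie in $R$, the induction hypothesis gives a contradiction.
\item If $v=r$, then $h>r$ forces $h\in R$, so $z$ comes after $h$ and also lies in $R$. But then $z>r=v$, a contradiction.
\item If $v\in L$ and $h\in R$, then $z$ lies in $R$, so $z>r>v$, a contradiction.
\item If $v\in L$ and $h\in L$, then $z$ lies in $L$ or in $R$. The case $z\in L$ is the first case. If $z\in R$, then $z>r>v$, a contradiction.
\end{itemize}

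For the converse, suppose $\pi$ avoids $231$. Let $L$ be the maximal block of entries after $r$ that are smaller than $r$, and let $R$ be the rest. Every entry of $R$ must exceed $r$: if some later entry $z<r$ appeared after an entry $h>r$, then $r,h,z$ would form a $231$. Both $L$ and $R$ avoid $231$, so by induction each is the preorder traversal of a binary search tree. Attaching these trees as the left and right subtrees of $r$ gives a binary search tree whose preorder traversal is $\pi$.

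\emph{Postorder.} The decomposition is now $\pi=L\,R\,r$, with $r=\pi_n$ the last entry and the same conditions on $L$ and $R$. We redo the case analysis with $312$ in place of $231$.

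For the forward direction, take a $312$ occurrence $h,z,v$ with $z<v<h$. If $v=r$, then $h>r$ puts $h\in R$, and $z$ comes after $h$ but before $r$, so $z\in R$ and $z>r$, a contradiction. The remaining mixed cases fail in the same way: an entry of $L$ followed by a larger entry of $R$ and then a middle value cannot produce the required order.

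For the converse, the split works as follows. Let $L$ be the maximal prefix of entries smaller than $r$. If some later entry $z<r$ appeared after an entry $h>r$, then $h,z,r$ would form a $312$, since $z<r<h$.

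The only step needing care is the converse direction, namely verifying that the split point is well-defined, meaning that all small entries precede all large ones. The single forbidden pattern witnessed together with the root settles this in each case.
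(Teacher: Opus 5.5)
Your proof is correct and is essentially the paper's argument: you induct on the decomposition $\pi=rLR$ (resp.\ $LRr$), use the root together with a misplaced pair $h>r>z$ as the forbidden occurrence that forces all smaller values before all larger ones, and rule out occurrences that use the root or cross between $L$ and $R$. The one soft spot is your sentence on the remaining mixed postorder cases, whose wording does not match the $312$ shape. The clean reason, as in the paper, is that a crossing occurrence would put its first (largest) entry in $L$ and a later entry in $R$, which is impossible because every value of $R$ exceeds every value of $L$.
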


\begin{proof}
For preorder, let $r$ be the first value. If the permutation avoids $231$, every later value smaller than $r$ must precede every later value larger than $r$. Otherwise $r,h,\ell$, with $h>r>\ell$, would be a $231$ occurrence. Thus the permutation has the form $rLR$, with the values of $L$ smaller than $r$ and those of $R$ larger. Both subsequences avoid $231$, so induction constructs the two subtrees.

Conversely, suppose $rLR$ is a valid preorder traversal. By induction, $L$ and $R$ avoid $231$. An occurrence using the root would need a value of $R$ followed by one of $L$, which is impossible. An occurrence crossing from $L$ to $R$ is also impossible, since all values in the later block are larger than all values in the earlier block, whereas the final value of $231$ is its smallest. Hence the whole traversal avoids $231$.

For postorder, let $r$ be the last value. Avoidance of $312$ again forces every value below $r$ to precede every value above $r$: a larger value $h$ before a smaller value $\ell$ would give the occurrence $h,\ell,r$. The resulting decomposition $LRr$ yields a tree by induction. Conversely, an occurrence of $312$ using the root would require a value of $R$ before one of $L$. An occurrence crossing the two subtrees cannot have its largest value first. Induction inside each subtree completes the proof.
\end{proof}

\begin{corollary}
\label{cor:bst}
Given a promised permutation of $[n]$, verifying whether it is a preorder traversal, or whether it is a postorder traversal, of some binary search tree can be done deterministically in one pass using $O(\log n)$ bits and $O(n)$ arithmetic operations. Both problems have deterministic space complexity $\Theta(\log n)$.
\end{corollary}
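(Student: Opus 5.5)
The upper bound follows directly from \Cref{lem:bst-characterization} combined with the algorithms already given. By that lemma, a permutation is a preorder traversal exactly when it avoids $231$, so the procedure of \Cref{prop:231-algorithm} run on the stream decides preorder validity: accept as a valid traversal if and only if the $231$-detector does not report an occurrence. Likewise, postorder validity is exactly $312$-avoidance, so \Cref{prop:312-algorithm} applies, accepting if and only if $E=0$ at the end. Both procedures are deterministic, make one pass, keep $O(\log n)$ bits in a constant number of registers, and perform $O(n)$ arithmetic operations. This gives the $O(\log n)$ bound for each problem.

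For the matching $\Omega(\log n)$ lower bound, I would give a short fooling-set argument for a deterministic one-pass algorithm. For the preorder case, restrict to streams that begin with a single value $r\in[n]$ and continue with a fixed suffix chosen so that validity depends on $r$. A natural family is $r$ followed by $1,2,\ldots,r-1$ in decreasing order and then $r+1,\ldots,n$ increasing. This is a valid preorder (a left path followed by a right path), since all smaller values precede all larger ones and each part avoids $231$.

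Now suppose two distinct first values $r<r'$ lead to the same memory state. Feed both the continuation that is valid after $r'$, namely $r'-1,\ldots,1$ with $r$ removed, followed by the values above $r'$, with $r'$ inserted at the appropriate place. The cleaner way to arrange this is with the continuation $\tau_{r'}$: the decreasing sequence of $[r'-1]\setminus\{r\}$, then $r$'s replacement, and so on. Because the bookkeeping of removing one value is fiddly, I would instead use a cleaner family: the first element $r$, then all values greater than $r$ in increasing order, then all values smaller than $r$. This contains $231$ whenever $r>1$ and $r<n$, via $r$, $r+1$, $1$, so it is not directly usable.

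So the plan is to choose, for each $r$, the suffix $S_r$ listing $[n]\setminus\{r\}$ as the decreasing run of values below $r$ followed by the increasing run of values above $r$. For $r\neq r'$, the stream $r'\,S_r$ is not a permutation, so the promise breaks the naive fooling argument. The main obstacle is therefore respecting the permutation promise.

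I would fix this by using two-element prefixes. The prefix $(r,\,r+1)$ followed by a suffix on the remaining values lets us distinguish states. Alternatively, the prefix can be the set $\{1,\ldots,k\}$ arranged in one of two orders, so that the remaining value set is identical and only the order differs. Concretely, consider prefixes consisting of the values $[k]$ arranged as $k$ followed by $1,\ldots,k-1$ increasing, for varying $k$. After reaching a common state, a suffix containing $k'$ would be needed, so a different value set again appears.

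Given this difficulty, I would fall back on the standard fact that any streaming algorithm that must distinguish $\Omega(n)$ prefix positions needs $\log n$ bits. Distinct prefix lengths paired with suffixes of appropriate lengths yield streams of different total length, which the fixed-$n$ model forbids. Hence the honest resolution is to cite Berendsohn's $\Omega(\log n)$ lower bound for length-three patterns, which \Cref{thm:dichotomy} already attributes to him. Complementation and reversal-free symmetry carry it to $231$ and $312$, and \Cref{lem:bst-characterization} transfers it to traversal verification. I expect adapting the fooling set to respect the promise to be the only nontrivial step, and I would simply invoke Berendsohn there.
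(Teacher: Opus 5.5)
Your final argument is the same as the paper's. The upper bound comes from \Cref{prop:231-algorithm,prop:312-algorithm} via \Cref{lem:bst-characterization}, and the lower bound is Berendsohn's $\Omega(\log n)$ bound (\Cref{thm:berendsohn}) transferred by observing that a traversal verifier is a $231$- or $312$-detector with its output bit negated, which does not change the number of states. No complementation of values is needed, since \Cref{thm:berendsohn} already covers every pattern of length at least two, and the abandoned fooling-set attempts can be dropped entirely.
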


\begin{proof}
Apply \Cref{prop:231-algorithm,prop:312-algorithm} and \Cref{lem:bst-characterization}. The lower bounds are the logarithmic pattern-detection lower bounds stated in \Cref{thm:berendsohn} below; complementing an output bit does not change the state count.
\end{proof}

\section{A Dichotomy for Fixed Patterns}
\label{sec:classification}

We complete the proof of \Cref{thm:dichotomy}. The logarithmic algorithms for the remaining patterns were proved in \Cref{sec:length-three}. We recall the monotone algorithm, give the general linear upper bound, and state the lower bounds from Berendsohn that complete the classification.

The standard increasing-subsequence algorithm, originating with Schensted~\cite{Schensted1961} and used in~\cite{Berendsohn2026}, immediately gives the following bound. 

\begin{lemma}
\label{lem:monotone}
The increasing and decreasing patterns of length $k$ can be detected deterministically in one pass with $O(k\log n)$ bits. For fixed $k$, the algorithm uses $O(n)$ arithmetic operations.
\end{lemma}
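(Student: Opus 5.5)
We would prove \Cref{lem:monotone} with the classical patience-sorting idea: track, for each possible length, the smallest value that can end an increasing subsequence of that length. We treat the increasing pattern $12\cdots k$ first. The decreasing pattern then follows by complementation, exactly as in \Cref{cor:four-patterns}: replace each arriving value $x$ by $n+1-x$, which turns decreasing subsequences into increasing ones.

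The algorithm keeps registers $L_1,\ldots,L_{k-1}$, each holding a value in $[n]$ or a symbol $\infty$, all initially $\infty$. The intended invariant is that after reading a prefix, $L_j$ equals the minimum last value over all increasing subsequences of length $j$ in that prefix, or $\infty$ if there is none. On receiving $x$, we proceed as follows.
\begin{enumerate}[label=(\roman*)]
\item If $x>L_{k-1}$, accept and enter an absorbing state.
\item Otherwise, let $j$ be the least index with $L_j\ge x$ (equality is impossible, since values are distinct), and set $L_j\leftarrow x$.
\end{enumerate}
Each register needs $O(\log n)$ bits, so the total is $O(k\log n)$ bits. A naive scan over the registers costs $O(k)$ operations per element, which is $O(n)$ operations overall for fixed $k$.

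Correctness rests on two facts, proved by induction on the prefix length.
\begin{enumerate}[label=(\alph*)]
\item The registers are strictly increasing in $j$ among their finite entries. If $L_{j+1}=w$ is the end of a length-$(j+1)$ increasing subsequence, its element before $w$ ends a length-$j$ subsequence and is smaller than $w$, so $L_j<L_{j+1}$.
\item The update preserves the minimality invariant. For the new element $x$, the longest increasing subsequence ending at $x$ has length $j$: appending $x$ to a subsequence ending at $L_{j-1}<x$ gives length $j$. No length-$(j+1)$ subsequence can end at $x$, since that would require a length-$j$ subsequence ending below $x$, contradicting $L_j\ge x$. Hence only $L_j$ can decrease, and it becomes exactly $x$; every other $L_{j'}$ is unchanged.
\end{enumerate}
Given the invariant, step (i) fires exactly when some length-$(k-1)$ increasing subsequence ends below $x$, that is, exactly when $x$ completes an occurrence of $12\cdots k$. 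So the algorithm accepts if and only if the pattern occurs.

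The only step needing care is (b), the argument that a single register changes and that it receives exactly $x$. It uses (a) and the distinctness of the values. The rest is routine bookkeeping.
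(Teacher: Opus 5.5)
Your proposal is correct and follows the paper's route: the paper gives no separate proof, only a citation of the standard Schensted increasing-subsequence algorithm. Your minimal-tail invariant on $L_1<\cdots<L_{k-1}$, the single-register update, and complementation for the decreasing case are the details behind that citation (with the harmless convention $L_0=-\infty$ when $j=1$).
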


\subsection{A linear upper bound for every fixed pattern}

A theorem of Marcus and Tardos implies the following enumerative bound~\cite[Corollary~2]{MarcusTardos2004}. 

\begin{theorem}[Marcus and Tardos]
\label{thm:stanley-wilf}
For every fixed permutation pattern $\sigma$, there is a constant $C_\sigma\ge1$ such that
\[
        |\Av_t(\sigma)|\le C_\sigma^t
        \qquad\text{for every integer }t\ge0.
\]
\end{theorem}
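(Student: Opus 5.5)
The plan is the standard two-step route: first a linear extremal bound for $0$--$1$ matrices, then Klazar's reduction from that bound to an exponential bound on $|\Av_t(\sigma)|$. Identify $\pi\in S_t$ with its $t\times t$ permutation matrix, and say a $0$--$1$ matrix $A$ \emph{contains} the $k\times k$ permutation matrix $P$ of $\sigma$ if $A$ has a $k\times k$ submatrix (rows and columns chosen in order) with a one wherever $P$ has a one. A permutation $\pi$ avoids $\sigma$ exactly when its matrix avoids $P$. So it suffices to bound the number $T(t)$ of $t\times t$ $0$--$1$ matrices avoiding $P$ by $C^t$.

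\textbf{Step 1: the Marcus--Tardos extremal bound.} Let $f(n)$ be the maximum number of ones in an $n\times n$ matrix avoiding $P$. I will show $f(n)\le c_P\,n$.
\begin{itemize}
\item Partition the matrix into $(n/k^2)^2$ blocks of size $k^2\times k^2$. Call a block nonzero if it contains a one.
\item The contracted $(n/k^2)\times(n/k^2)$ matrix recording nonzero blocks also avoids $P$. Hence it has at most $f(n/k^2)$ ones.
\item Call a block \emph{wide} if its ones occupy at least $k$ distinct columns, and \emph{tall} if they occupy at least $k$ distinct rows.
\item In any block column, fewer than $k\binom{k^2}{k}$ blocks are wide. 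Otherwise, by pigeonhole, $k$ of them would use the same $k$ columns. One can then pick ones in these blocks realizing $P$: the blocks supply the row order, and the shared columns supply the column order. The same bound holds for tall blocks in any block row.
\item Blocks that are neither wide nor tall contain at most $(k-1)^2$ ones each. Wide and tall blocks contain at most $k^4$ ones each.
\item Together these give
\[
f(n)\le (k-1)^2 f(n/k^2) + 2k^5\binom{k^2}{k}\,n ,
\]
which unrolls to $f(n)\le 2k^4\binom{k^2}{k} n$.
\item Non-multiples of $k^2$ are handled by padding with zero rows and columns, using monotonicity of $f$.
\end{itemize}

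\textbf{Step 2: Klazar's counting reduction.}
\begin{itemize}
\item Map each $2n\times 2n$ matrix avoiding $P$ to an $n\times n$ matrix by contracting $2\times 2$ blocks: a block becomes a one exactly when it contains a one. The image still avoids $P$.
\item The image has at most $c_P n$ ones. Each one-entry has at most $15$ nonzero preimage blocks, and each zero-entry has exactly one (the zero block).
\item Hence $T(2n)\le T(n)\,15^{c_P n}$. Iterating gives $T(2^j)\le 15^{2c_P 2^j}$.
\item For general $t$, embed a $t\times t$ matrix into the top-left corner of a $2^j\times 2^j$ matrix with $2^j<2t$. This embedding is injective and preserves avoidance, so $T(t)\le 15^{4c_P t}$.
\end{itemize}

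\textbf{Conclusion and main obstacle.} Setting $C_\sigma=15^{4c_P}\ge 1$ gives $|\Av_t(\sigma)|\le T(t)\le C_\sigma^t$ for all $t\ge1$. The case $t=0$ holds trivially, since there is one permutation and $C_\sigma^0=1$. The main obstacle is Step~1. In particular, the pigeonhole bound on wide blocks must be set up carefully so that the chosen ones genuinely form an ordered submatrix matching $P$. I would verify that embedding first, since the recursion and the counting in Step~2 are routine once it is in place.
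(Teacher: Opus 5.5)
The paper gives no proof of its own here; it imports the bound from Marcus and Tardos (their Corollary~2), and your proposal is a correct reconstruction of exactly that argument: the block-decomposition proof of the linear F\"uredi--Hajnal bound, followed by Klazar's $2\times 2$ contraction. The only slip is arithmetic: since there are just $n/k^2$ block columns, there are at most $(n/k^2)\cdot k\binom{k^2}{k}$ wide blocks, so the additive term is $2k^3\binom{k^2}{k}n$ rather than $2k^5\binom{k^2}{k}n$ (only the former unrolls to $2k^4\binom{k^2}{k}n$), which is harmless because Step~2 needs only some linear bound.
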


\begin{lemma}
\label{lem:linear-upper}
For every fixed pattern $\sigma$, there is a deterministic one-pass streaming algorithm for detecting $\sigma$ in permutations of $[n]$ with at most
\[
        1+(1+C_\sigma)^n
\]
states. In particular, it uses $O_\sigma(n)$ bits.
\end{lemma}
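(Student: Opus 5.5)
The plan is to build the automaton directly from the input prefix. A state records the prefix read so far, but only while that prefix still avoids $\sigma$; one extra absorbing state records that $\sigma$ has already occurred. Formally, the state set consists of a special accepting state $\top$ together with all sequences $(\pi_1,\ldots,\pi_t)$ of distinct values from $[n]$, for $0\le t\le n$, that avoid $\sigma$.

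The transition on receiving $x$ works as follows. From $\top$ we stay in $\top$. From a prefix state $p$, consider the extended sequence $px$. If $px$ contains $\sigma$, move to $\top$; otherwise move to the state $px$. At the end of the stream, accept exactly when the state is $\top$. Correctness follows by induction on the number of elements read: the state is $\top$ if and only if the prefix read so far contains $\sigma$, and otherwise the state equals that prefix. Containment is monotone under extension, since an occurrence in a prefix is also an occurrence in every longer prefix, so the absorbing state is justified. The transition function is an arbitrary function of the state and $x$, which is allowed in the space-complexity model where only the number of states is counted.

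The main step is counting the prefix states. A $\sigma$-avoiding sequence of $t$ distinct values from $[n]$ is determined by two pieces of data: its value set $S\subseteq[n]$ with $|S|=t$, and its relative order, which is a permutation in $S_t$. Pattern containment depends only on the relative order, so this relative order lies in $\Av_t(\sigma)$. The map from sequences to such pairs is injective. By \Cref{thm:stanley-wilf}, the number of prefix states of length $t$ is therefore at most $\binom nt C_\sigma^t$. Summing over $t$ and applying the binomial theorem gives at most $(1+C_\sigma)^n$ prefix states. Adding the state $\top$ gives the claimed total of $1+(1+C_\sigma)^n$.

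Finally, an algorithm with $N$ states can be run with $\lceil\log N\rceil$ bits by encoding each state in binary. Here this gives $\lceil\log(1+(1+C_\sigma)^n)\rceil\le n\log(1+C_\sigma)+1=O_\sigma(n)$ bits.

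I do not expect a real obstacle. The only point needing care is that the stored object is the actual prefix, including its value set, and not just its pattern class; the injectivity of the map to (value set, relative order) settles this.
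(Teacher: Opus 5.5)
Your proposal is correct and matches the paper's proof: avoiding prefixes become states, one absorbing accepting state is added, and the count goes through the pair (value set, relative order in $\Av_t(\sigma)$) and the binomial theorem. The only detail the paper handles explicitly is the case where $x$ already occurs in the current prefix. This violates the promise, and $px$ is then not a state, so the transition is chosen arbitrarily.
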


\begin{proof}
For each sequence $w$ of distinct elements of $[n]$ that avoids $\sigma$, create a state $q_w$. Include the empty sequence as the initial state. Add one further state $q_{\mathrm{yes}}$, which is accepting and absorbing.

On input $x$ in state $q_w$, if $x$ has already appeared in $w$, choose an arbitrary transition, since such an input violates the promise. Otherwise consider the concatenation $wx$. If it contains $\sigma$, move to $q_{\mathrm{yes}}$; if it avoids $\sigma$, move to $q_{wx}$. Once a prefix contains the pattern, every extension contains it, so these transitions recognize the desired property.

A length-$t$ avoiding sequence is uniquely determined by its set of $t$ values and its relative order in $\Av_t(\sigma)$. The total number of nonaccepting states is therefore
\[
        \sum_{t=0}^n\binom nt|\Av_t(\sigma)|
        \le\sum_{t=0}^n\binom nt C_\sigma^t
        =(1+C_\sigma)^n.
\]
Adding the accepting state and taking the logarithm gives the claimed bound.
\end{proof}

\subsection{Completing the classification}

We use the following previously established lower bounds. The one-pass statements are Theorems~1.1 and~1.4 of Berendsohn~\cite{Berendsohn2026}; the extension of the second statement to several passes is his Theorem~4.4 and the randomized extension discussed there.

\begin{theorem}[Berendsohn]
\label{thm:berendsohn}
For every fixed pattern $\sigma$ of length at least two, one-pass detection in permutations of $[n]$ requires $\Omega_\sigma(\log n)$ bits.
If $\sigma$ is non-monotone and has length at least four, detection requires $\Omega_\sigma(n)$ bits. 
\end{theorem}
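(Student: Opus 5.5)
This statement is quoted from Berendsohn~\cite{Berendsohn2026}, so the intended route is to cite his Theorems~1.1 and~1.4. If I were to reprove it, I would use two standard state-counting arguments: a fooling-set argument for the logarithmic bound, and a reduction from a one-way communication problem for the linear bound.

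\textbf{The $\Omega_\sigma(\log n)$ bound.} The plan is to exhibit $n^{\Omega(1)}$ prefixes that are pairwise distinguishable in the Myhill--Nerode sense. Two prefixes $P\ne P'$ are distinguishable if some completion $S$ makes $PS$ a permutation avoiding $\sigma$ while $P'S$ contains $\sigma$, or vice versa. Any deterministic algorithm must then reach different states on them, so it uses at least $\log$ of their number bits. Reversal is not available in one pass, so I would only use complementation to reduce to the case $\sigma_1<\sigma_2$.
\begin{enumerate}[label=(\roman*)]
\item For each $t\in[n/2]$, take a prefix $P_t$ that avoids $\sigma$ and encodes a threshold $t$ in its value set. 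For example, take the values above $t$ in decreasing order, which avoids $\sigma$ when $\sigma_1<\sigma_2$.
\item For $t<t'$, design a completion $S$ using the values missing from one prefix but present in the other, together with a short suffix. The suffix should complete $\sigma$ against exactly the values in $(t,t']$.
\end{enumerate}
The parameter bookkeeping for a general $\sigma$ is the only fiddly point. Randomized variants follow by the usual $\log$ of deterministic communication argument.

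\textbf{The $\Omega_\sigma(n)$ bound.} I would reduce from one-way \textsc{Index}. Alice holds $S\subseteq[m]$ with $m=\Theta(n)$, and Bob holds $j\in[m]$. Alice streams a prefix $A(S)$ that avoids $\sigma$, and Bob streams a suffix $B(j)$. The gadget must satisfy that $A(S)B(j)$ is a permutation of $[n]$, and that it contains $\sigma$ iff $j\in S$. An $s$-bit algorithm then yields an $s$-bit one-way protocol, and \textsc{Index} requires $\Omega(m)$ bits even with randomization; $p$ passes give $\Omega(m/p)$ via disjointness.

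For the gadget, I would split $\sigma$ at a position where the prefix part $\sigma_1\cdots\sigma_a$ and suffix part interleave nontrivially in value; such a split exists because $\sigma$ is non-monotone with $|\sigma|\ge4$. Alice uses $m$ value-disjoint blocks, block $i$ realizing a copy of the prefix part (or a harmless variant) according to whether $i\in S$. Bob's suffix uses values placed so that only the prefix part in block $j$ can be completed to $\sigma$. The main obstacle is the permutation promise: the set of values Alice uses must not leak information that changes which values Bob must supply. I would handle this by having both variants of every block use the same value set and differ only in order, so Bob's suffix is always the fixed complement.

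The hardest step is finding such a block pair for every non-monotone $\sigma$ of length at least four: the two orders must be order-distinct, neither may create spurious occurrences across blocks, and exactly one may complete with Bob's suffix. I would first try a case analysis over the symmetry classes under complement of the length-four patterns, then extend to longer patterns by locating a non-monotone length-four subpattern and padding the remaining entries with far-away monotone runs. This is where the length-three patterns must fail, consistent with \Cref{sec:length-three}.
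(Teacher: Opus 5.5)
Citing Berendsohn's Theorems~1.1 and~1.4 is exactly what the paper does; it gives no argument of its own for this statement, so your primary route matches. The fallback reproof you sketch, however, would not work as written.

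\textbf{The logarithmic part.} Your prefixes $P_t$ (the values above $t$ in decreasing order) have pairwise different value sets. The only admissible completions of $P_t$ are orderings of $[t]$. Hence no suffix $S$ makes both $P_tS$ and $P_{t'}S$ permutations of $[n]$. On one of the two inputs the algorithm's output is unconstrained, so merging the two states costs nothing. Fooling prefixes must share a value set and differ only in order.

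Moreover, for $\sigma\in\{12,21\}$ no distinguishability argument can give more than one bit. A length-$i$ prefix can be completed to a $12$-avoiding permutation only if it equals $n,n-1,\ldots,n-i+1$. Every other prefix contains $12$ under every admissible completion. Indeed, an algorithm whose transitions may depend on the current position detects $12$ with a single bit, by checking $\pi_i=n+1-i$. So any $\Omega(\log n)$ bound for length two must use that the model charges the algorithm for tracking its position (for instance, position-independent transitions). A Myhill--Nerode argument never uses this.

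\textbf{The linear part.} The gadget for an arbitrary non-monotone $\sigma$ with $|\sigma|\ge4$ is the entire content of Berendsohn's Theorem~1.4, and you leave it open. The proposed reduction to a non-monotone length-four subpattern, padding the rest with far-away monotone runs, is not automatic. The remaining entries of $\sigma$ generally interleave with the chosen four in both position and value. They therefore have to be realized inside the relevant block, without creating spurious occurrences across blocks.

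The sketch is therefore a plan rather than a proof, and for this statement the citation is the proof.
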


\begin{proof}[Proof of \Cref{thm:dichotomy}]
If $\sigma$ is monotone, \Cref{lem:monotone} gives $O_\sigma(\log n)$ space. If $|\sigma|=3$, the same bound follows either from that lemma or from \Cref{cor:four-patterns}. In both cases the logarithmic lower bound is supplied by \Cref{thm:berendsohn}.

Every other pattern of length at least two is non-monotone and has length at least four. For these patterns, \Cref{thm:berendsohn} gives $\Omega_\sigma(n)$ space and \Cref{lem:linear-upper} gives $O_\sigma(n)$ space. The bounds for length three follow from \Cref{prop:312-algorithm,prop:231-algorithm,lem:monotone} and complementation.
\end{proof}

\bibliography{permutation_streaming}
\bibliographystyle{alpha}

\end{document}